\def\max{\operatorname{max}}
\begin{document}

\mainmatter 

\title{Minimal Conflicting Sets for the Consecutive Ones Property in
  ancestral genome reconstruction}

\author{Cedric Chauve\inst{1}
  \and Utz-Uwe Haus\inst{2} \and Tamon Stephen\inst{1} 
  \and Vivija P. You\inst{1}\thanks{Current address: Department of
  Mathematics, WestVirginia University, USA,
  \email{vpy@math.wvu.edu}}}

\institute{Department of Mathematics, Simon Fraser University,
  Burnaby (BC), Canada \\
  \email{[cedric.chauve,tamon,vpy]@sfu.ca} \and Institute for
  Mathematical Optimization, University of Magdeburg, Germany \\
  \email{haus@imo.math.uni-magdeburg.de}}


\maketitle

\begin{abstract}
  A binary matrix has the Consecutive Ones Property (C1P) if its
  columns can be ordered in such a way that all 1's on each row are
  consecutive.  A {\em Minimal Conflicting Set} is a set of rows that
  does not have the C1P, but every proper subset has the C1P. Such
  submatrices have been considered in comparative genomics
  applications, but very little is known about their combinatorial
  structure and efficient algorithms to compute them. We first
  describe an algorithm that detects rows that belong to Minimal
  Conflicting Sets. This algorithm has a polynomial time complexity
  when the number of $1$s in each row of the considered matrix is
  bounded by a constant. Next, we show that the problem of computing
  all Minimal Conflicting Sets can be reduced to the joint generation
  of all minimal true clauses and maximal false clauses for some
  monotone boolean function. We use these methods on simulated data
  related to ancestral genome reconstruction to show that computing
  Minimal Conflicting Set is useful in discriminating between true
  positive and false positive ancestral syntenies. We also study a
  dataset of yeast genomes and address the reliability of an ancestral
  genome proposal of the {\em Saccahromycetaceae} yeasts.

  \smallskip\noindent Draft, do not distribute. Version of \today.
  \end{abstract}


\section{Introduction}
\label{sec:intro}

A binary matrix $M$ has the Consecutive Ones Property (C1P) if its
columns can be ordered in such a way that all 1's on each row are
consecutive. Algorithmic questions related to the C1P for binary
matrices are central in genomics, for problems such as physical
mapping~\cite{alizadeh-physical,christof-branch,lu-test} and ancestral
genome reconstruction
(see~\cite{adam-modelfree,chauve-methodological,ma-reconstructing,ouangraoua-prediction}
for recent references).  Here we are interested in the problem of
inferring the architecture of an ancestral genome from the comparison 
of extant genomes (due to DNA decay, for example, such genomes cannot
be sequenced directly). 
Note however that our results are of interest for physical
mapping too.  Briefly, when inferring an ancestral genome architecture
from the comparison of extant genomes, it is common to represent
partial information about the ancestral genome $G$ as a binary matrix
$M$: columns represent genomic markers that are believed to have been
present in $G$, rows of $M$ represent groups of markers that are
believed to be co-localized in $G$, and the goal is to infer the order
of the markers on the chromosomes of $G$. Such ordering of the markers
define chromosomal segments called Contiguous Ancestral Regions
(CARs).

If the matrix $M$ contains only correct information (i.e. groups of
markers that were co-localized in the ancestral genome of interest),
then it has the C1P, which can be decided in linear-time and
space~\cite{booth-testing,habib-lex,hsu-simple,mcconnell-certifying,meidanis-on}.
However, with most real datasets, $M$ contains errors. These can be
either incorrect columns, that represent genomic markers that were not
present in $G$, or incorrect rows, that represent groups of markers that
were not co-localized in $G$\footnote{Note however that this
  classification of possible errors is somewhat simplified
  (see~\cite{goldberg-four} for a more detailed discussion regarding
  errors in physical mapping).}. A fundamental question is then to
detect such errors in order to correct $M$, and the classical approach
to handle these (unknown) errors rely on combinatorial optimization,
asking for an optimal transformation of $M$ into a matrix that has the
C1P, for some notion of transformation of a matrix linked to the
expected errors; for example, if incorrect markers (resp. groups of
co-localized genes) are expected, one could ask for a maximal subset
of columns (resp. rows) of $M$ that has the C1P. In both cases, such
combinatorial optimization problems are intractable
(see~\cite{dom-recognition,dom-algorithmic} for recent surveys).

In the present work, we assume the following situation: $M$ is a
binary matrix that represents information about an unknown ancestral
genome $G$ and does not have the C1P due to erroneous rows, from now
called {\em false positives}. The notion of {\em Minimal Conflicting
  Set} was introduced to handle non-C1P binary matrices and false
positives in~\cite{bergeron-reconstructing}
and~\cite{stoye-unified}. If a binary matrix $M$ does not have the
C1P, a {\em Minimal Conflicting Set} (MCS) is a submatrix $M'$ of $M$
composed of a subset of the rows of $M$ such that $M'$ does not have
the C1P, but every proper subset of rows of $M'$ has the C1P. The
Conflicting Index (CI) of a row of $M$ is the number of MCS it belongs
to. Hence, MCS can be seen as the smallest structures that prevent a
matrix from having the C1P. It is then natural to expect that false
positive belong to MCS, and that every MCS contains at least one false
positive. In~\cite{bergeron-reconstructing}, an extreme approach was
followed in handling non-C1P matrices: all rows belonging to at least
one MCS were discarded from $M$, which can consequently
discard also a large number of true
positives. In~\cite{stoye-unified}, rows were ranked according to
their CI (or more precisely an approximation of their CI) before being
processed by a branch-and-bound algorithm to extract a maximal subset
of rows of $M$ that has the C1P. These two approaches raise natural
algorithmic questions related to MCS that we address here:
\begin{itemize}
\item For a row $r$ of $M$, is the CI of $r$ greater than $0$?
\item Can we compute the CI of all rows $r$ of $M$ or enumerate all MCS of
  $M$?
\end{itemize}
Our work is motivated by the fact that the fundamental question is to
detect the false positives rows in $M$ rather than extracting a
maximal C1P submatrix. We investigate here, using both simulations and
real data, the following question: does a false positive row have some
characteristic properties in terms of MCS or CI? This question
naturally extends to the notion of {\em Maximal C1P Sets} (MC1PS), the
dual notion of MCS, that represent sets of row that do have the C1P
but can not be extended while maintaining this property. 

After some preliminaries on the C1P, MCS and MC1PS
(Section~\ref{sec:prelim}), we attack two problems. First, in
Section~\ref{sec:VPY}, we consider the problem of deciding if a given
row of a binary matrix $M$ belongs to at least one MCS.  We show that,
when all rows of a matrix are constrained to have a bounded number of
$1$'s, deciding if the CI of a row of a matrix is greater than $0$ can
be done in polynomial time.  The constraint on the number of $1$'s per
row is motivated by real applications: in~\cite{ma-reconstructing} for
example, adjacencies, that is rows with two $1$s per row, were
considered for the reconstruction of an ancestral mammalian genome.
Next, in Section~\ref{sec:MBF}, we attack the problem of generating
all MCS or MC1PS for a binary matrix $M$. We show that this problem
can be approached as a joint generation problem of minimal true
clauses and maximal false clauses for monotone boolean functions.
This can be done in quasi-polynomial time thanks to an oracle-based
algorithm for the dualization of monotone boolean
functions~\cite{eiter-computational,fredman-complexity,gurvich-generating}.
We implemented this algorithm~\cite{cl-jointgen} and applied it on
simulated and real data (Section~\ref{sec:results}).  Application on
simulated data suggest that the computing all conflicting sets and the
conflicting index of all rows of a binary matrix is useful to
discriminate between true positive and false positive ancestral
syntenies.  We also study a real dataset of yeast genomes and address
the reliability of an ancestral genome proposal of the {\em
  Saccahromycetaceae} yeasts. We conclude by discussing several open
problems.

\section{Preliminaries}
\label{sec:prelim}

We briefly review here ancestral genome reconstruction and known
algorithmic results related to Minimal Conflicting Sets and Maximal
C1P Sets.

\subsection{The Consecutive Ones Property, Minimal Conflicting Sets, Maximal C1P Sets}\label{sec:C1P}
Let $M$ be a binary matrix with $m$ rows and $n$ columns, with $e$
entries $1$. We denote by $r_1,\ldots,r_m$ the rows of $M$ and
$c_1,\ldots,c_n$ its columns. We assume that $M$ does not have two
identical rows, nor two identical columns, nor a row with less than
two entries $1$ or a column with no entry $1$. We denote by
$\Delta(M)$ the maximum number of entries $1$ found in a single row of
$M$, called the {\em degree} of $M$.  In the following, we sometimes
identify a row of $M$ with the set of columns where it has entries
$1$, and a set of rows with the matrix defined by the submatrix of $M$
containing exactly these rows.

\begin{definition}\em\label{def:MCS}
A Minimal Conflicting Set (MCS) is a set $R$ of rows of $M$ that does
not have the C1P but such that every proper subset of $R$ has the C1P.
The {\em Conflicting Index} (CI) of a row $r_i$ of $M$ is the number
of MCS that contain $r_i$. A row $r_i$ of $M$ that belongs to at least
one conflicting set is said to be a {\em conflicting row}. The {\em
  Conflicting Ratio} (CR) of a row $r_i$ of $M$ is the ratio between
the CI of $r_i$ and the number of MCS that $M$ contains.

\end{definition}

\begin{definition}\em\label{def:MC1PS}
A Maximal C1P Set (MC1PS) is a set $R$ of rows of $M$ that has the C1P
and such that adding any additional row from $M$ to it results in a
set of rows that does not have the C1P. The {\em MC1PS Index} (C1PI)
of a row $r_i$ of $M$ is the number of MC1PS that contain $r_i$. The
{\em MC1PS Ratio} (C1PR) of a row $r_i$ of $M$ is the ratio between the
C1PI of $r_i$ and the number of MC1PS that $M$ contains.
\end{definition}

For a subset $I=\{i_1,\ldots,i_k\}$ of $[n]$, we denote by $R_I$ the
set $\{r_{i_1},\ldots,r_{i_k}\}$ of rows of $M$. If $R_I$ is an MCS
(resp.~MC1PS), we then say that $I$ is an MCS (resp.~MC1PS).

\subsection{Ancestral genome reconstruction}\label{sec:AGR} The present work is
motivated by the problem of inferring an ancestral genome architecture
given a set of extant genomes. An approach to this problem, described
in~\cite{chauve-methodological}, consists in defining an alphabet of
genomic markers that are believed to appear uniquely in the extinct
ancestral genome. An {\em ancestral synteny} is a set of markers that
are believed to have been consecutive along a chromosome of the
ancestor. A set of ancestral syntenies can then be represented by a
binary matrix $M$\/: columns represent markers and the $1$ entries of
a given row define an ancestral synteny. If all ancestral syntenies
are true positives (i.e. represent sets of markers that were
consecutive in the ancestor), then $M$ has the C1P and defines a set
of {\em Contiguous Ancestral Regions} (CARs),
see~\cite{chauve-methodological,ma-reconstructing}. Otherwise, some
ancestral syntenies are false positives that create MCS and the key
problem is to detect and discard them. 

Note however that we do not assume that any false positive creates an
MCS; indeed it is possible that a false positive row
contains two entries $1$ corresponding to two markers that are
extremities of two real ancestral chromosomes, and then this false
positive would create a chimeric ancestral chromosome. 
Such false positives have to be detected using techniques
other than the ones we describe in the present work, as
there is no combinatorial signal to detect them based on
the Consecutive Ones Property.

In the framework described in~\cite{chauve-methodological} (that was
also used in~\cite{adam-modelfree,ma-reconstructing,stoye-unified}),
ancestral syntenies are defined as common intervals (markers consist 
of two genome segments having the same content,
see~\cite{bergeron-formal}) of a pair of genomes whose evolutionary
path goes through the desired ancestor. As ancestral syntenies are
detected by mining common intervals in pairs of genomes, it is then
easy to control the degree of the resulting matrix by restricting the
comparison to common intervals of bounded size, such as adjacencies
(ancestral syntenies of size 2), which is fundamental for the result
we describe in Section~\ref{sec:VPY}.

Also, it is common to weight the rows of $M$ with a measure of
confidence in its quality, for example based on the distribution of a
group of co-localized markers among the considered extant
genomes~\cite{chauve-methodological,ma-reconstructing}, and we can
expect that false positives have a lower score in general than true
positives. However, the concepts of MCS and MC1PS are independent of this
weighting and we do not consider it here from a theoretical point of
view, although we do consider it in our experiments on real data.

\subsection{Preliminary algorithmic results on MCS and MC1PS}
\label{sec:prelim-algo}

In the case where each row of $M$ has exactly two entries $1$, $M$
naturally defines a graph $G_M$ with vertex set $\{c_1,\ldots,c_n\}$
and where there is an edge between $c_i$ and $c_j$ if and only if
there is a row with entries $1$ in columns $c_i$ and $c_j$. The
following property is then obvious.

\begin{property}\label{prpy:MCS-2}
  If $\Delta(M)=2$, a set of rows $R$ of $M$ is an MCS 
  if and only if the subgraph induced by the corresponding edges is a
  star with four vertices (also called a {\em claw}) or a cycle.
\end{property}

This property implies immediately that both the number of MCS and the
number of MC1P can be exponential in $n$. Also, combined with the fact
that counting the number of cycles that contain a given edge in an
arbitrary graph is \#P-hard~\cite{valiant-complexity}, this leads to
the following result.

\begin{theorem}
  \label{thm:CI-complexity}
  The problem of computing the Conflicting Index of a row in a binary
  matrix is \#P-hard.
\end{theorem}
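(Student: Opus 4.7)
The plan is to Turing-reduce from the \#P-hard problem of counting simple cycles through a given edge in an arbitrary graph, exploiting Property~\ref{prpy:MCS-2} in the $\Delta(M)=2$ regime.

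First, given an arbitrary simple graph $G=(V,E)$ and a distinguished edge $e\in E$, I would build the binary matrix $M$ whose columns are indexed by $V$ and whose rows are indicator vectors of the edges of $E$. By construction $\Delta(M)=2$ and $G_M=G$, so Property~\ref{prpy:MCS-2} applies verbatim: every MCS of $M$ corresponds either to a claw ($K_{1,3}$) or to a simple cycle in $G$. Writing $r_e$ for the row of $M$ encoding $e$, this gives the decomposition
\begin{equation*}
\mathrm{CI}(r_e) \;=\; \mathrm{claws}_G(e) \;+\; \mathrm{cycles}_G(e),
\end{equation*}
where $\mathrm{claws}_G(e)$ and $\mathrm{cycles}_G(e)$ denote respectively the number of induced claws and the number of simple cycles of $G$ containing $e$ as an edge.

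The second step is to observe that $\mathrm{claws}_G(e)$ can be computed in polynomial time: if $e=\{u,v\}$, any claw containing $e$ has its center at $u$ or at $v$, and is determined by choosing two further neighbors of the chosen center distinct from the other endpoint. Hence
\begin{equation*}
\mathrm{claws}_G(e) \;=\; \binom{d(u)-1}{2} + \binom{d(v)-1}{2},
\end{equation*}
which is readable off the adjacency list of $G$. Consequently a polynomial-time algorithm for the CI would yield a polynomial-time algorithm for $\mathrm{cycles}_G(e) = \mathrm{CI}(r_e) - \mathrm{claws}_G(e)$, contradicting the \#P-hardness of counting simple cycles through a prescribed edge asserted in~\cite{valiant-complexity}.

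The only delicate point I anticipate is the exact statement of Valiant's hardness result: the cleanest citation concerns counting simple cycles, not necessarily "through a fixed edge". If the "through a fixed edge" version is not stated directly, I would insert a brief auxiliary reduction, e.g.\ summing $\mathrm{cycles}_G(e)$ over all $e$ and dividing by cycle length, or alternatively attaching a pendant gadget that forces every counted cycle to traverse $e$, so that counting all simple cycles reduces in polynomial time to a polynomial number of "cycles through $e$" queries. With that caveat handled, the two displayed identities above close the argument.
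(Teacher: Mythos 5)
Your proposal is correct and follows essentially the same route as the paper, which derives the theorem directly from Property~\ref{prpy:MCS-2} together with the \#P-hardness of counting cycles through a given edge cited from~\cite{valiant-complexity}. You usefully make explicit two details the paper leaves implicit --- the polynomial-time closed form $\binom{d(u)-1}{2}+\binom{d(v)-1}{2}$ for subtracting off the claw MCS, and the reduction from Valiant's stated results (e.g.\ counting $s$--$t$ simple paths) to the ``cycles through a fixed edge'' variant --- but these are elaborations of the same argument rather than a different one.
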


Given a set of $p$ rows $R$ of $M$, deciding whether these rows form
an MCS can be achieved in polynomial time by testing (1) whether they
form a matrix that does not have the C1P and, (2) whether every
maximal proper subset of $R$ (obtained by removing exactly one row)
forms a matrix that has the C1P. This requires only $p+1$ C1P tests
and can then be done in time $O(p(n+p+e))$, using an efficient
algorithm for testing the C1P~\cite{mcconnell-certifying}.

The problem of generating one MCS is not hard and can be achieved in
polynomial time by the following simple greedy algorithm:
\begin{enumerate}
\item let $R=\{r_1,\ldots,r_m\}$ be the complete set of rows of $M$;
\item
  for $i$ from $1$ to $m$, if removing $r_i$ from $R$ results in a set
  of rows that has the C1P then keep $r_i$ in $R$, otherwise remove
  $r_i$ from $R$; 
\item the subset of rows $R$ obtained at the end of this loop is then
  an MCS.
\end{enumerate}

Given $M$ and a list $C=\{R_1,$ $\ldots,R_k\}$ of known MCS, the {\em
  sequential generation problem} $Gen_{MCS}(M, C)$ is the following:
decide if $C$ contains all MCS of $M$ and, if not, compute one MCS
that does not belong to $C$. Using the obvious property that, if $R_i$
and $R_j$ are two MCS then neither $R_i \subset R_j$ nor $R_j \subset
R_i$, Stoye and Wittler~\cite{stoye-unified} proposed the following
backtracking algorithm for $Gen_{MCS}(M,C)$:
\begin{enumerate}
\item Let $M'$ be defined by removing from $M$ at least one row from
  each $R_i$ in $C$ by recursing on the elements of $C$.
\item If $M'$ does not have the C1P then compute an MCS of $M'$ and
  add it to $C$, else backtrack to step 1 using another set of rows to
  remove such that each $R_i\in C$ contains at least one of these
  rows.
\end{enumerate}
This algorithm can require time $\Omega(n^k)$ to terminate, which, as
$k$ can be exponential in $n$, can be superexponential in $n$.  As far
as we know, this is the only previously proposed algorithm to compute
all MCS. 

\begin{remark}\label{prem-MC1P}
  The algorithms to decide if a set of rows is an MCS, compute an MCS
  or compute all MCS can be transformed in a straightforward way to
  answer the same questions for MC1PS, with similar complexity. The
  analogue of Property~\ref{prpy:MCS-2} for MC1PS in matrices of
  degree $2$ is that a MC1P is a maximal set of paths: adding an edge
  creates a claw or a cycle. We are not aware of any result on
  counting or enumerating maximal sets of paths.
\end{remark}

In summary, 
the number of MCS or MC1PS can be exponential, and there
is no known efficient algorithm to decide, in general, if a given row
belongs to some MCS.  In the next section we show that there is an
efficient algorithm if $\Delta(M)$ is fixed.

\section{Deciding if a row is a conflicting row.}
\label{sec:VPY}

We now describe our first result, an algorithm to decide if a row of
$M$ is a conflicting row (i.e. has a CI greater than $0$). Detecting
non-conflicting rows is important, for example to speed-up algorithms
that compute an optimal C1P subset of rows of $M$, or in generating
all MCS. Our algorithm has a complexity that is exponential in
$\Delta(M)$. It is based on a combinatorial characterization of
non-C1P matrices due to Tucker~\cite{tucker-structure}.

\paragraph{Tucker patterns.} The class of C1P matrices is closed under
column and row deletion.  Hence there exists a characterization of
matrices which do not have the C1P by forbidden minors.
Tucker~\cite{tucker-structure} characterizes these forbidden
submatrices, called $M_{I}$, $M_{{II}}$ $M_{{III}}$, $M_{IV}$ and
$M_V$: if $M$ is binary matrix that does not have the C1P, then it
contains at least one of these matrices as a submatrix. We call these
forbidden matrices the {\em Tucker patterns}.  Patterns $M_{IV}$ and
$M_{V}$ each have $4$ rows and respectively $6$ and $5$ columns, while
$M_{I}, M_{II}, M_{III}$ are $(q+2)$ by $(q+2)$, $(q+3)$ by $(q+3)$
and $(q+2)$ by $(q+3)$ respectively for a parameter $q \ge 1$.
When $q=1$, pattern $M_I$ corresponds to cycle and pattern $M_{III}$
corresponds to the claw.  The patterns are described in the Appendix.

\paragraph{Bounded patterns.}
Let $P$ be a set of $p$ rows $R$ of $M$, that defines a $p\times n$
binary matrix. $P$ is said to {\it contain exactly} a Tucker pattern
$M_X$ if a subset of its columns defines a matrix equal to pattern
$M_X$.  The following properties are straightforward from the
definitions of Tucker patterns and MCS:

\begin{property}
  \label{prpy:tucker}
  (1) A set $P$ of rows of $M$ is an MCS if and only if $P$ contains
  exactly a Tucker pattern and no proper subset of rows of $P$ does
  contain exactly a Tucker pattern.  
  \\   
  (2) If a subset of $p$ rows of $M$ contains exactly a Tucker pattern
  $M_{{II}}$ $M_{{III}}$, $M_{IV}$ or $M_{V}$, then $4 \leq p \leq
  \max(4,\Delta(M)+1)$.
\end{property}

If a row $r_i$ satisfies the conditions of
Property~\ref{prpy:tucker}.(1) for a Tucker pattern $M_X$, we say that
$r_i$ belongs to an MCS {\em due to pattern $M_X$}. Tucker patterns
with at most $\Delta(M)+1$ rows are said to be {\em
  bounded}. Property~\ref{prpy:tucker} leads to the following results.

\begin{proposition}
  \label{prop:bounded}
  Let $M$ be a binary matrix that does not have the C1P, and
  $r_i$ a row of $M$. Deciding if $r_i$ belongs to an MCS due to a Tucker pattern
  of $p$ rows can be done in $O(m^{p-1}p(n+p+e))$ worst-case
  time.
\end{proposition}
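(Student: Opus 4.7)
The plan is to reduce the problem to brute-force enumeration of candidate $p$-row subsets containing $r_i$, relying on Property~\ref{prpy:tucker}.(1) to recognize those that actually correspond to an MCS. Indeed, any MCS whose associated Tucker pattern has $p$ rows must itself consist of exactly $p$ rows: if the MCS had fewer rows, it could not contain the pattern, and if it had more, the rows of the pattern would form a proper subset that is not C1P, contradicting minimality. So ``$r_i$ belongs to an MCS due to a Tucker pattern of $p$ rows'' is equivalent to the existence of a $p$-subset $P\subseteq\{r_1,\dots,r_m\}$ with $r_i\in P$ such that $P$ is not C1P but every $(p-1)$-subset of $P$ is C1P.

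With this reformulation in hand, the algorithm is immediate. First, enumerate all subsets of $p-1$ rows of $M\setminus\{r_i\}$; there are $\binom{m-1}{p-1} = O(m^{p-1})$ of them. Second, for each such subset $S$, form $P = S\cup\{r_i\}$ and test whether $P$ is an MCS using the procedure already sketched in Section~\ref{sec:prelim-algo}: run a C1P test on $P$, and, if $P$ fails the test, run a C1P test on each of the $p$ subsets obtained by deleting a single row of $P$. If $P$ fails while all of its $(p-1)$-subsets succeed, declare that $r_i$ belongs to an MCS and stop; otherwise continue. If no candidate succeeds, declare that $r_i$ belongs to no MCS of size $p$.

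For the complexity, each C1P test on a submatrix with at most $p$ rows, $n$ columns, and at most $e$ entries equal to $1$ can be performed in time $O(n+p+e)$ using a linear-time C1P recognition algorithm such as the one of~\cite{mcconnell-certifying}. Since at most $p+1$ such tests are performed per candidate subset, verifying one candidate costs $O(p(n+p+e))$, and multiplying by the $O(m^{p-1})$ candidates gives the announced bound of $O(m^{p-1}p(n+p+e))$.

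There is no genuine obstacle here beyond carefully justifying the characterization of the MCS size in the first paragraph and invoking a linear-time C1P test; the rest is bookkeeping. The only small subtlety is that one should stop at the first successful candidate to avoid doing unnecessary work in practice, but this does not affect the worst-case bound.
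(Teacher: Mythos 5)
Your proof is correct and follows essentially the same route as the paper: brute-force enumeration of the $O(m^{p-1})$ subsets of $p$ rows containing $r_i$, each checked for being an MCS with at most $p+1$ linear-time C1P tests, giving the stated bound. The only addition is your explicit justification that an MCS due to a $p$-row Tucker pattern has exactly $p$ rows, which the paper leaves implicit via Property~\ref{prpy:tucker}.(1).
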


\begin{proof} 
  The following brute-force algorithm decides if $r_i$
  belongs to an MCS due to a Tucker pattern of $p$ rows.
  \begin{itemize}
  \item
    Examine all sets of $p$ rows of $M$ that contain $r_i$.
  \item 
    For a given set of rows, if it does not have the C1P but every
    proper subset does have the C1P, then $r_i$ belongs to an
    MCS due to a Tucker pattern of $p$ rows. 
  \item 
    If no such set satisfies this property, then $r_i$ does not
    belong to any MCS due to a Tucker pattern of $p$ rows.
  \end{itemize}
  The complexity can be explained as follows: there are
  $O(m^{\max(3,p-1)})$ subsets of the $m$ rows of $M$ to
  consider. For each such set, we need to perform at most $p+1$ C1P
  tests, and each such test can be performed in $O(n+p+e)$ worst-case
  time.
  \qed
\end{proof}

The following corollary follows immediately from
Property~\ref{prpy:tucker}.(2).

\begin{corollary}
  \label{cor:bounded}
  Let $M$ be a binary matrix that does not have the C1P, and $r_i$ a
  row of $M$. Deciding if $r_i$ belongs to an MCS due to a bounded
  Tucker pattern can be done in
  $O(m^{\max(3,\Delta(M))}\Delta(M)(n+\Delta(M)+e))$ worst-case time.
\end{corollary}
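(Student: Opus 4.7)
The plan is to reduce the corollary directly to Proposition~\ref{prop:bounded} by iterating over all admissible values of $p$, the row-count of a bounded Tucker pattern. First I would invoke Property~\ref{prpy:tucker}.(2), which tells us that whenever a subset of $p$ rows of $M$ contains exactly a Tucker pattern of type $M_{II}, M_{III}, M_{IV}$ or $M_V$, the number of rows $p$ lies between $4$ and $\max(4,\Delta(M)+1)$. For the $M_I$ (cycle) family the range of $p$ is the same, since cycles longer than $\Delta(M)+1$ cannot be realized by rows each containing at most $\Delta(M)$ entries equal to $1$ — indeed, a cycle pattern on $p$ rows forces each row to have exactly two $1$'s, so $\Delta(M)\ge 2$, but in the more general form of $M_I$ every consecutive pair of rows must overlap and the total width $p$ is bounded accordingly. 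Hence checking whether $r_i$ belongs to an MCS due to a bounded Tucker pattern amounts to checking, for each $p \in \{4,\dots,\max(4,\Delta(M)+1)\}$, whether $r_i$ belongs to an MCS due to a Tucker pattern with exactly $p$ rows.

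Next I would apply Proposition~\ref{prop:bounded} for each such $p$, giving a worst-case time of $O(m^{p-1}p(n+p+e))$. Summing over $p$ from $4$ up to $\max(4,\Delta(M)+1)$ is dominated by the largest value of $p$, since the bound is geometric in $m$ and $m \ge p$ in any nontrivial instance. Substituting $p = \max(4,\Delta(M)+1)$ into the per-$p$ bound yields
\[
O\!\bigl(m^{\max(3,\Delta(M))}\,\Delta(M)\,(n+\Delta(M)+e)\bigr),
\]
exactly as stated. The number of Tucker pattern families to consider is a constant, so it is absorbed in the $O(\cdot)$.

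I do not expect any real obstacle here: the corollary is essentially a matter of bookkeeping on top of Proposition~\ref{prop:bounded}. The only mildly delicate point is confirming that the cycle family $M_I$ also respects the row bound $\max(4,\Delta(M)+1)$ under the degree constraint, so that we really do only need to enumerate $p$ up to this value; once that is established, the complexity bound follows immediately by taking the maximum over the finitely many pattern families and values of $p$.
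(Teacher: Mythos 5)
Your core argument is the paper's own: the paper derives Corollary~\ref{cor:bounded} ``immediately'' from Property~\ref{prpy:tucker}.(2) by running the brute-force test of Proposition~\ref{prop:bounded} for each admissible $p$ up to $\max(4,\Delta(M)+1)$, and the sum over $p$ is dominated by the largest term because the dependence on $m$ is geometric; substituting $p-1\le\max(3,\Delta(M))$ and $p=O(\Delta(M))$ gives exactly the stated bound. So the reduction and the bookkeeping are fine. However, the auxiliary claim you flag as the ``only mildly delicate point'' is false: an $M_I$ pattern on $p$ rows has exactly two $1$'s per row inside the pattern, so even a matrix with $\Delta(M)=2$ can contain $M_I$ patterns (long cycles in the bipartite graph $B_M$) with arbitrarily many rows --- there is no bound on $p$ for the $M_I$ family in terms of $\Delta(M)$. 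This is precisely why the paper must handle ``unbounded'' patterns ($M_I$ with at least $\Delta(M)+2$ rows) by an entirely separate argument in Proposition~\ref{prop:cycle}. Your proof survives only because the corollary restricts attention, by definition, to MCS due to a \emph{bounded} Tucker pattern, i.e.\ one with at most $\max(4,\Delta(M)+1)$ rows; the enumeration range is thus given by the definition of ``bounded'' (together with Property~\ref{prpy:tucker}.(2), which ensures all of $M_{II},\dots,M_V$ fall in that range), not by any degree-based bound on cycles. You should delete the erroneous justification rather than rely on it, since the same misconception would break the overall algorithm of Theorem~\ref{thm:main} if carried further.
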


\paragraph{Unbounded patterns.}
We now describe how to decide if a row $r_i$ of $M$ belongs to an MCS
due to an unbounded Tucker pattern. From
Property~\ref{prpy:tucker}.(2), this Tucker pattern can only be a
pattern $M_I$ with at least $\Delta(M)+2$ rows. The key idea is that
Tucker pattern $M_I$ describes a cycle in a bipartite graph encoded by
$M$.

Let $B_M$ be the bipartite graph defined by $M$ as follows: vertices
are rows and columns of $M$, and every entry $1$ in $M$ defines an
edge. Pattern $M_{I}$ with $p$ rows corresponds to a cycle of length
$2p$ in $B_M$. Hence, if $R$ contains $M_I$ with $p$-row, the subgraph
of $B_M$ induced by $R$ contains such a cycle and possibly other
edges.

Let $C=(r_{i_1}, c_{j_1} \ldots, r_{i_p}, c_{j_p})$ be a cycle
in $B_M$. We say that a vertex $r_{i_q}$ belonging to $C$ is {\em
  blocked in $C$} if there exists a vertex $c_j$ such $M_{i_q,j}=1$,
$M_{i_{q-1},j}=1$ (resp. $M_{i_p,j}=1$ if $q=1$) and
$M_{i_{q+1},j}=1$ (resp. $M_{1,j}=1$ if $q=p$). In other words,
replacing the path between $r_{i_{q-1}}$ and $r_{i_{q+1}}$ going
through $r_{i_q}$ in $C$ by the edges $\{r_{i_{q-1}},c_j\}$ and
$\{r_{i_{q+1}},c_j\}$ gives a shorter cycle that does not contain
$r_{i_q}$.

\begin{proposition}\label{prop:cycle}
  Let $M$ be a binary matrix that does not have the C1P and $r_i$ be a
  row of $M$ that does not belong to any MCS due to a bounded Tucker
  pattern. Then $r_i$ belongs to an MCS if and only if $r_i$ belongs
  to a cycle $C=(r_{i_1}, c_{j_1}, \ldots r_{i_p}, c_{j,p})$ in $B_M$,
  with $p \geq \Delta(M)+2$, and $r_i$ is not blocked in $C$.
\end{proposition}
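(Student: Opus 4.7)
The plan is to prove both directions of the equivalence, using the correspondence recalled just before the statement: a Tucker pattern $M_I$ with $p$ rows corresponds to a cycle of length $2p$ in the bipartite graph $B_M$, with the additional property (since $M_I$ is exactly realized) that the cycle is induced in $B_M$ restricted to those $p$ rows and $p$ columns.

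For the forward direction, I would assume $r_i$ lies in an MCS $R$. The hypothesis that $r_i$ is not in any MCS arising from a bounded Tucker pattern, combined with Property~\ref{prpy:tucker}(2), forces the pattern witnessing non-C1P of $R$ to be $M_I$ with $p = |R| \ge \Delta(M)+2$ rows. This yields columns $J = \{c_{j_1}, \ldots, c_{j_p}\}$ with $M_{R,J} = M_I$, equivalently an induced cycle $C$ of length $2p$ in $B_M[R \cup J]$ passing through $r_i = r_{i_q}$. To conclude, I would argue by contradiction that $r_i$ is not blocked in $C$. A blocking column $c_j$ lets one replace the sub-path $r_{i_{q-1}}, c_{j_{q-1}}, r_{i_q}, c_{j_q}, r_{i_{q+1}}$ by the two edges $\{r_{i_{q-1}}, c_j\}$ and $\{r_{i_{q+1}}, c_j\}$, producing a cycle of length $2(p-1) \ge 2(\Delta(M)+1) \ge 6$ in $B_M$ supported on $R' = R \setminus \{r_i\}$. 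From this shortcut cycle one extracts a shortest (hence induced) cycle in the bipartite subgraph spanned by $R'$; the resulting induced cycle has length $\geq 6$, so it witnesses an $M_I$ pattern on rows $R'$. Thus $R'$ is not C1P, contradicting the minimality of $R$.

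For the reverse direction, I would start from a cycle $C$ through $r_i$ of length $2p \ge 2(\Delta(M)+2)$ in $B_M$ in which $r_i$ is not blocked, and first pass to a suitable subcycle through $r_i$ that is induced and still satisfies both length and non-blocking constraints, so that $M_{R_C, J_C} = M_I$ and $R_C$ is not C1P. Then apply the greedy MCS-extraction procedure of Section~\ref{sec:prelim-algo} to $R_C$, ordering the iterations so that $r_i$ is the last row tested; the output $R^*$ is an MCS contained in $R_C$. It remains to show $r_i \in R^*$. If $r_i$ had been removed by the greedy step, the resulting non-C1P subset of $R_C \setminus \{r_i\}$ would witness another Tucker pattern, which by the bounded-pattern hypothesis must again be $M_I$ with $\ge \Delta(M)+2$ rows, hence an induced cycle $C''$ in $B_M$ avoiding $r_i$. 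A careful analysis of how the column set $J_{C''}$ interacts with the two cycle-neighbors $c_{j_{q-1}}, c_{j_q}$ of $r_i$ in $C$ would expose a column $c_j$ satisfying $M_{i_{q-1},j} = M_{i_q, j} = M_{i_{q+1}, j} = 1$, contradicting the assumption that $r_i$ is not blocked in $C$.

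The main obstacle is the bookkeeping for local modifications of cycles near $r_i$: in both directions one manipulates $B_M$ so that shortcut or chord-elimination operations produce a Tucker $M_I$ pattern (rather than collapsing to a $2\times 2$ all-ones block, which is C1P). The length hypothesis $p \ge \Delta(M)+2$ is precisely what guarantees enough slack so that shortened cycles are still long enough to carry an $M_I$ pattern, and the full three-entry blocking condition (rather than just the two entries needed for the pure graph-theoretic shortcut) is the technical strengthening required to match columns produced by one $M_I$ witness against the columns of another in the reverse direction.
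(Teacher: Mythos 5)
Your forward direction follows the paper's: the MCS containing $r_i$ must be an exact $M_I$ pattern with $p\ge\Delta(M)+2$ rows by Property~\ref{prpy:tucker}.(2), and a blocking column shortcuts the corresponding cycle to one of length $2(p-1)\ge 6$ supported on the remaining rows, contradicting minimality of the MCS. (Both you and the paper gloss over the fact that the shortcut cycle may have chords through the new column and that repeated shortening must not collapse to $4$-cycles only, but you are at least as careful as the source here.)

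The reverse direction, however, contains a genuine gap. When the greedy procedure discards $r_i$, you obtain an MCS $R''\subseteq R_C\setminus\{r_i\}$ and assert that ``by the bounded-pattern hypothesis'' $R''$ must be due to $M_I$ with at least $\Delta(M)+2$ rows. But the hypothesis of Proposition~\ref{prop:cycle} only constrains MCS \emph{containing} $r_i$; it says nothing about $R''$, which excludes $r_i$ and may perfectly well arise from $M_{II}$, $M_{III}$, $M_{IV}$, $M_V$ or a short $M_I$. A second, related problem is that you aim at the wrong contradiction: the obstruction witnessed by $R''$ can sit on a portion of the cycle far from $r_i$, in which case no amount of bookkeeping will produce a column with a $1$ in rows $r_{i_{q-1}}$, $r_{i_q}$ and $r_{i_{q+1}}$ simultaneously, so non-blocking of $r_i$ cannot be contradicted. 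Note also that taking $C$ induced in $B_M$ does not help: a column outside $J_C$ meeting two non-consecutive rows of $C$ is not a chord of the induced subgraph on $R_C\cup J_C$, yet it is exactly what lets $R_C\setminus\{r_i\}$ be non-C1P. The paper avoids both issues by choosing $C$ minimal among \emph{all} cycles of $B_M$ in which $r_i$ is unblocked, and then showing that any non-C1P witness inside $R_C$ minus a row forces either a smaller $M_I$ through $r_i$ (when $r_i$ lies in the sub-MCS, using the hypothesis legitimately) or a chord between two surviving rows; in either case one obtains a strictly smaller cycle in which $r_i$'s two cycle-neighbours, hence its unblocked status, are preserved, contradicting minimality. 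Your argument needs to be restructured around such a global minimality of the cycle rather than around forcing a blocking column.
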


\begin{proof}
  If $r_i$ belongs to an MCS but not due to a bounded Tucker pattern,
  then, according to Property~\ref{prpy:tucker}.(2), it is due to
  pattern $M_I$ defined on $p$ rows of $M$ containing $r_i$. So $r_i$
  belongs to a cycle $C$ defined by this pattern $M_I$. However, if
  $r_i$ is blocked in $C$, then, removing the row $r_i$ and its
  adjacent edges still leaves a cycle in the subgraph of $B_M$ induced
  by the remaining vertices, which contradicts the fact that the
  initial $p$ rows form an MCS.

  Now, assume that $r_i$ belongs to a cycle $C=(r_{i_1}, c_{j_1},
  \ldots r_{i_p}, c_{j,p})$ in $B_M$ and $r_i$ is not blocked in
  $C$. We want to show that $r_i$ belongs to an MCS. Assume moreover
  that $C$ is minimal in the following sense: there is no smaller
  cycle in $B_M$ containing $r_i$ unblocked.
  \begin{itemize}
  \item
    The set $P=\{r_{i_1},\ldots,r_{i_p}\}$ of rows obviously does not
    have the C1P, because it contains (exactly) the Tucker pattern
    $M_I$.
  \item
    We now want to show that removing any row from this gives a
    matrix that has the C1P. Let $r_{i_j}$ be a row belonging to
    $C$. Assume that removing $r_{i_j}$ and the adjacent edges results
    in a matrix with $p-1$ rows that does not have the C1P, and then
    contains an MCS.  If $r_i$ belongs to this MCS, then it is due to
    a Tucker pattern $M_I$, as by hypothesis it does not belong to an
    MCS due to a bounded Tucker pattern. This then contradicts the
    minimality assumption on the cycle $C$. Now assume that $r_i$ does
    not belong to this MCS, which is then included in a set $Q$ of
    $q=p-2$ or $q=p-1$ (if $r_i=r_{i_j}$) rows.  A {\em chord} in the
    cycle $C$ is a set of two edges $(r,c)$ and $(r',c)$ such that $r$
    and $r'$ belong to $C$ but are not consecutive row vertices in
    this cycle. We claim that the definition of Tucker patterns
    implies that there always exist a column in $M$ that did not
    belong to $C$ and that defines a chord in $C$ between two rows of
    $Q$. This again contradicts the minimality assumption on the cycle
    $C$. Hence, by contradiction, we have that removing $r_{i_j}$ and
    the adjacent edges results in a matrix that has the C1P.
  \end{itemize}
  \qed
\end{proof}

\paragraph{The algorithm.}  To decide whether $r_i$ belongs to an MCS,
we can then (1) decide whether it belongs to an MCS due to a bounded
pattern, as described in Corollary~\ref{cor:bounded}, and then, if
this is not the case, (2) check whether $r_i$ belongs to an MCS due to
an unbounded pattern. For this second case, we only need to find a
cycle where $r_i$ is not blocked.  This can be done in polynomial time
by considering all pairs of possible rows $r_{i_1}$ and $r_{i_2}$ that
each have an entry 1 in a column where $r_i$ has an entry 1 (there are
at most $O(m^2)$ such pairs of rows), exclude the cases where the
three rows $r_i$, $r_{i_1}$ and $r_{i_2}$ have an entry 1 in the same
column, and then check if there is a path in $B_M$ between $r_{i_1}$
and $r_{i_2}$ that does not visit $r_i$. This leads to the main result
of this section.

\begin{theorem}\label{thm:main}
  Let $M$ be an $m\times n$ binary matrix that does not have the C1P,
  and $r_i$ be a row of $M$.  Deciding if $r_i$ belongs to at least
  one MCS can be done in
  $O(m^{\max(3,\Delta(M))}\Delta(M)(n+\Delta(M)+e))$ time.
\end{theorem}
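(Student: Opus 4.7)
The plan is to combine the two cases already prepared in the text: an MCS containing $r_i$ comes either from a bounded Tucker pattern (handled by Corollary~\ref{cor:bounded}) or from an unbounded pattern $M_I$ (characterized by Proposition~\ref{prop:cycle}), so it suffices to run the two corresponding tests in succession and verify that the total cost matches the announced bound.

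First I would invoke Corollary~\ref{cor:bounded} to decide whether $r_i$ belongs to an MCS due to a bounded Tucker pattern; this alone costs $O(m^{\max(3,\Delta(M))}\Delta(M)(n+\Delta(M)+e))$ in the worst case, and will turn out to dominate everything else. If the answer is positive we are done, so assume from now on that $r_i$ is not in any MCS due to a bounded pattern. By Proposition~\ref{prop:cycle}, deciding whether $r_i$ belongs to any MCS then reduces to deciding whether $r_i$ lies in an unblocked cycle of $B_M$ of length at least $2(\Delta(M)+2)$.

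To implement this second test I would enumerate every pair of rows $r_{i_1},r_{i_2}$ that share, respectively, a column with $r_i$ (there are at most $\Delta(M)^2$ such columns to fix and $O(m^2)$ row pairs), discard the pairs for which $r_i,r_{i_1},r_{i_2}$ share a common column (these would give the ``blocking'' configuration at $r_i$ itself), and for each surviving pair run a breadth-first search in $B_M$ with $r_i$ (and its incident edges) removed to check whether $r_{i_1}$ and $r_{i_2}$ lie in the same connected component. Any such path, together with the two length-two paths through $r_i$, closes a cycle through $r_i$ in which $r_i$ is not blocked, so by Proposition~\ref{prop:cycle} this certifies that $r_i$ belongs to an MCS; if no pair yields a path, then $r_i$ does not lie in any cycle of $B_M$ that leaves it unblocked, so $r_i$ belongs to no MCS. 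Each BFS costs $O(n+m+e)$, and there are $O(m^2)$ pairs, giving $O(m^2(n+m+e))$ overall for this phase.

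The main obstacle is the bookkeeping to make sure the cycle-based step is genuinely cheaper than the bounded-pattern step. Since $\Delta(M)\ge 2$ (a matrix of degree less than $2$ is trivially C1P) and the bounded step costs $\Omega(m^{\max(3,\Delta(M))}\Delta(M)(n+e))$, the $O(m^2(n+m+e))$ cost of the unbounded step is absorbed into that bound, and the sum of the two phases is $O(m^{\max(3,\Delta(M))}\Delta(M)(n+\Delta(M)+e))$, as claimed. \qed
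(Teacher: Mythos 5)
Your proposal matches the paper's own argument: the paper proves Theorem~\ref{thm:main} exactly by chaining Corollary~\ref{cor:bounded} with the cycle test of Proposition~\ref{prop:cycle}, implemented by enumerating the $O(m^2)$ pairs of rows sharing a column with $r_i$, discarding triples with a common column, and searching for a path in $B_M$ avoiding $r_i$, with the bounded-pattern step dominating the running time. The only addition you make is the explicit observation that the $O(m^2(n+m+e))$ cost of the second phase is absorbed by the first, which the paper leaves implicit; otherwise the two arguments coincide.
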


\section{Generating all MCS and MC1PS using Monotone Boolean Functions}
\label{sec:MBF}

In this section, we describe an algorithm that enumerates all MCS and
MC1PS of a binary matrix $M$ simultaneously in quasi-polynomial
time. They key point is to describe this generation problem as a joint
generation problem for monotone boolean functions.

Let $[m]=\{1,2,\ldots,m\}$.  For a set $I=\{i_1,\ldots,i_k\} \subseteq
[m]$, we denote by $X_I$ the boolean vector $(x_1,\ldots,x_m)$ such
that $x_j=1$ if and only if $i_j \in I$.

\begin{definition}\em\label{def:MBF}
 A {\em boolean function} $f: \{0,1\}^m \rightarrow \{0,1\}$ is said
 to be {\em monotone} if for every $I,J \subseteq [m]$, $I \subseteq J
 \Rightarrow f(X_I) \leq f(X_J)$.
\end{definition}

\begin{definition}\em\label{def:MTC-MFC}
  Given a boolean function $f$, a boolean vector $X$ is said to be a
  {\em Minimal True Clause} (MTC) if $f(X_I)=1$ and $f(X_J)=0$ for
  every $J \subset I$.  Symmetrically, $X_I$ is said to be a {\em
    Maximal False Clause} (MFC) if $f(X_I)=0$ and $f(X_J)=1$ for every
  $I \subset J$.  We denote by $MTC(f)$ (resp.~$MFC(f)$) the set of
  all MTC (resp.~MFC) of $f$.
\end{definition}

For a given $m\times n$ binary matrix $M$, let $f_M: \{0,1\}^m
\rightarrow \{0,1\}$ be the boolean function defined by $f_M(X_I)=1$
if and only if $R_I$ does not have the C1P, where $I \subseteq [m]$.
This boolean function is obviously monotone and the following
proposition is immediate.

\begin{proposition}
  \label{prop:C1P-MBF}
  Let $I=\{i_1,\ldots,i_k\} \subseteq [m]$. $R_I$ is an MCS (resp.~MC1PS) 
  of $M$ if and only if $X_I$ is an MTC (resp.~MFC) for $f_M$.
\end{proposition}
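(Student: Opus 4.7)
The plan is to prove Proposition~\ref{prop:C1P-MBF} by directly unpacking the definitions on each side and observing that they match, using only the monotonicity of $f_M$ as a bridge where the quantifiers differ slightly. First I would verify that $f_M$ is indeed monotone: if $I \subseteq J$ and $R_J$ has the C1P, then since the C1P is closed under row deletion (any row ordering of columns witnessing the C1P for $R_J$ still witnesses it for $R_I$), $R_I$ also has the C1P; contrapositively $f_M(X_I) \le f_M(X_J)$. This monotonicity is asserted in the paragraph preceding the proposition but is worth recording explicitly since it is used in the MC1PS direction.

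For the MCS $\Leftrightarrow$ MTC equivalence, I would simply chase the definitions. By definition of $f_M$, the condition ``$R_I$ does not have the C1P'' is exactly $f_M(X_I)=1$, and ``every proper subset of $R_I$ has the C1P'' translates termwise into $f_M(X_J)=0$ for all $J \subsetneq I$. These are exactly the two conditions making $X_I$ an MTC of $f_M$ (Definition~\ref{def:MTC-MFC}), so the equivalence is immediate in both directions.

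For the MC1PS $\Leftrightarrow$ MFC equivalence the translation is equally direct once monotonicity is invoked. The MC1PS condition says $R_I$ has the C1P, i.e.\ $f_M(X_I)=0$, and that $R_{I \cup \{j\}}$ fails the C1P for every $j \in [m] \setminus I$, i.e.\ $f_M(X_{I\cup\{j\}})=1$. For the forward direction I would then extend this to all proper supersets $J \supsetneq I$: pick any $j \in J \setminus I$, and by monotonicity $f_M(X_J) \ge f_M(X_{I \cup \{j\}}) = 1$, so $X_I$ is an MFC. Conversely, if $X_I$ is an MFC then in particular $f_M(X_{I\cup\{j\}})=1$ for every $j \notin I$, which is exactly the MC1PS maximality condition.

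There is really no hard step here: the proposition is essentially a notational restatement that $f_M$ encodes the C1P obstruction structure of $M$, and the only minor subtlety is the asymmetry between ``add one row'' (in the MC1PS definition) and ``every proper superset'' (in the MFC definition), which is resolved in one line by monotonicity. Accordingly I would present the proof as two short paragraphs, one per equivalence, with monotonicity stated once up front.
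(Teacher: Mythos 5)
Your proof is correct and is exactly the definitional unpacking the paper has in mind when it declares the proposition ``immediate'' and omits a proof. The one point you rightly flag --- bridging the gap between the single-row extension in the MC1PS definition and the all-proper-supersets quantifier in the MFC definition via monotonicity of $f_M$ --- is handled correctly and is the only step that requires any argument at all.
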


It follows from Proposition~\ref{prop:C1P-MBF} that generating all MCS
reduces to generating all MTC for a monotone boolean function. This
very general problem has been the subject of intense research, and we
describe briefly below some important properties. 

\begin{theorem}{\em \cite{gurvich-generating}}
  \label{thm:MTC-seq}
  Let $C=\{X_1, \ldots, X_k\}$ be a set of MTC (resp.~MFC) of a
  monotone boolean function $f$.  The problem of deciding if $C$
  contains all MTC (resp.~MFC) of $f$ is coNP-complete.
\end{theorem}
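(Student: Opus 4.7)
The plan has two parts, matching the two directions of a completeness proof.

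For containment in \textsf{coNP}, a ``no'' certificate (i.e., a proof that $C$ is not the complete list of MTCs of $f$) is simply any missing MTC $X^\ast$. To verify such a certificate in polynomial time I would (i) check that $X^\ast$ does not appear in $C$; (ii) evaluate $f(X^\ast)$ and confirm that it equals $1$; and (iii) for each coordinate $i$ with $x^\ast_i = 1$, evaluate $f$ at the vector obtained by flipping $x^\ast_i$ to $0$ and confirm the value is $0$. This uses at most $m+1$ evaluations of $f$, so it is polynomial under any standard representation of the function (for instance, as a monotone CNF, or through a polynomial-time evaluation oracle). The MFC case is completely symmetric: flip the role of $0$'s and $1$'s in step (iii), using monotonicity to cover only the coordinates where $x^\ast_i = 0$.

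For \textsf{coNP}-hardness I would reduce from a canonical \textsf{coNP}-hard problem, say from the complement of \textsc{Sat}. The goal is to translate an arbitrary (non-monotone) SAT instance $\varphi$ into a pair $(f, C)$, with $f$ a monotone function and $C$ an explicitly listed subset of $MTC(f)$, such that $C$ is complete if and only if $\varphi$ is unsatisfiable. The standard trick to keep the function monotone is to double variables: for each original variable $x_i$ introduce two positive variables $y_i$ and $\bar y_i$, rewrite each clause of $\varphi$ using only these positive variables, and add monotone consistency constraints forcing that in any minimal true assignment exactly one of $y_i, \bar y_i$ is set to~$1$. The list $C$ would then enumerate the ``trivial'' MTCs that are forced by the consistency gadgets, so that any extra MTC of $f$ projects to a satisfying assignment of $\varphi$, and conversely.

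The main obstacle is the careful engineering of this reduction. One has to ensure simultaneously that (a) every satisfying assignment of $\varphi$ really does lift to an MTC of $f$ that is \emph{not} already in $C$ (minimality on the monotone side has to match feasibility on the non-monotone side, which usually requires auxiliary padding variables or extra clauses), and (b) no spurious MTCs outside $C$ can be produced by assignments that do not satisfy $\varphi$. Once the MTC direction is settled, the MFC statement follows by duality, applying the same reasoning to the complementary monotone function whose maximal false clauses are precisely the MFCs of $f$.
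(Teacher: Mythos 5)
First, a point of reference: the paper does not prove this statement at all---it is quoted verbatim from Gurvich and Khachiyan~\cite{gurvich-generating}---so your attempt can only be compared with the standard argument in that reference. Your \textsf{coNP}-membership argument is complete and correct: a missing MTC is a polynomially checkable witness for the ``no'' answer (one membership test against $C$, one evaluation to confirm $f(X^\ast)=1$, and at most $m$ evaluations to confirm minimality), and the MFC case is symmetric. No issues there.

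The hardness half, however, is a plan rather than a proof, and you flag the unresolved part yourself. The plan is the right one, and the ``careful engineering'' you worry about is in fact quite light; here is how the reduction closes, essentially as in Gurvich--Khachiyan. Given a CNF $\varphi=c_1\wedge\dots\wedge c_k$ over $x_1,\dots,x_n$ (normalized, harmlessly, so that no single literal occurs in every clause), introduce $2n$ variables $u_1,\dots,u_n,v_1,\dots,v_n$ and define the monotone formula $f(u,v)=\bigvee_{i}(u_i\wedge v_i)\ \vee\ \bigwedge_{j}\bigl(\bigvee_{x_i\in c_j}u_i\vee\bigvee_{\neg x_i\in c_j}v_i\bigr)$, which is evaluable in polynomial time. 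Take $C=\{\,\{u_i,v_i\}: i\in[n]\,\}$; each of these is an MTC. Any MTC outside $C$ cannot set both $u_i$ and $v_i$ for any $i$ (else it would strictly dominate a member of $C$), so it must satisfy the second conjunct, and any consistent completion of it reads off a satisfying assignment of $\varphi$; conversely, a satisfying assignment yields a true point dominating no member of $C$, hence a new MTC. Thus $C$ is complete iff $\varphi$ is unsatisfiable. Note that your anticipated need for ``auxiliary padding variables'' does not arise: the consistency terms $u_i\wedge v_i$ are themselves the trivial MTCs listed in $C$. One further point you should make explicit: the statement is sensitive to how $f$ is represented. If $f$ were given by its complete prime CNF (i.e., all MFCs known), deciding completeness of a list of MTCs is exactly the monotone dualization decision problem, which is solvable in quasi-polynomial time (Theorem~\ref{thm:MTC-MFC}) and hence presumably not \textsf{coNP}-hard; the hardness above relies on $f$ being given by a general monotone formula or a polynomial-time oracle. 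Your duality remark for the MFC case is fine, since $f^{d}(x)=\neg f(\neg x)$ is computable whenever $f$ is, and its minimal true points are the complements of the MFCs of $f$.
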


\begin{theorem}{\em \cite{gunopulos-data}}
  \label{thm:MTC-oracle}
  The problem of generating all MTC of a monotone boolean function $f$
  using an oracle to evaluate this function can require up to
  $|MTC(f)+MFC(f)|$ calls to this oracle.
\end{theorem}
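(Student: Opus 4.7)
The plan is to prove the theorem by an adversary argument: for any deterministic algorithm using only an evaluation oracle for a monotone $f$, I will exhibit monotone functions forcing at least $|MTC(f)|+|MFC(f)|$ queries before the algorithm can safely output the list of MTC. The core claim I will establish is that a correct algorithm must query \emph{every} MTC and \emph{every} MFC explicitly; once that is proved, the bound follows by summing.

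First I would handle the MTC side. Suppose for contradiction that the algorithm terminates after having correctly output all MTC of $f$ without ever querying the oracle on some specific MTC $T$. Define $f'$ by $f'(T) := 0$ and $f'(Y) := f(Y)$ for all $Y \neq T$. I need to check that $f'$ is still monotone. For any $Y \subsetneq T$, the minimality of $T$ as an MTC of $f$ forces $f(Y) = 0$, so $f'(Y) = 0 \leq 0 = f'(T)$; for any $Y \supsetneq T$, the inequality $f'(T) = 0 \leq f'(Y)$ is automatic; and for pairs not involving $T$, monotonicity is inherited from $f$. Since $T$ was never queried, the entire transcript of oracle answers is identical under $f$ and $f'$, hence the algorithm produces the same output on both inputs. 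But $T$ is an MTC of $f$ and not even a true point of $f'$, so the lists of MTC differ, contradicting correctness.

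Next I would handle the MFC side symmetrically. Suppose the algorithm terminates without ever querying some MFC $F$, and define $f'(F) := 1$ and $f' := f$ elsewhere. Monotonicity of $f'$ holds because every proper superset of $F$ already satisfies $f = 1$ (by maximality of $F$), and every proper subset $Y \subsetneq F$ satisfies $f(Y) = 0$: indeed, if $f(Y) = 1$ then monotonicity of $f$ would give $f(F) \geq f(Y) = 1$, contradicting $f(F) = 0$. This same observation shows that $F$ is an MTC of $f'$. The algorithm cannot distinguish $f$ from $f'$ on its transcript, yet its output on $f$ omits $F$, which is an MTC of $f'$; contradiction.

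Combining the two arguments shows that any correct algorithm must make at least $|MTC(f)| + |MFC(f)|$ distinct oracle calls. To finish, I would exhibit an explicit family of monotone functions (for example, threshold or matching-based functions) for which both $|MTC(f)|$ and $|MFC(f)|$ can be made arbitrarily large, so that the bound is not vacuous. The main delicate point is the verification that the single-point flip preserves monotonicity; this is exactly where the minimality of $T$ (respectively, the maximality of $F$) is indispensable, and it is also where care must be taken to ensure that the only change to the MTC (respectively MFC) structure occurs at $T$ (resp.\ $F$) itself.
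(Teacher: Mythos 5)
The paper does not prove this statement at all: it is quoted verbatim from the cited reference (Gunopulos et al.), so there is no internal proof to compare against. Your adversary argument is the standard proof of this classical fact and it is correct. The key observation you use --- that the minimal true points and maximal false points are exactly the points of the hypercube at which the value of a monotone $f$ can be flipped without destroying monotonicity --- is precisely why the bound is $|MTC(f)|+|MFC(f)|$ and not something larger or smaller, and your verification of monotonicity of the flipped function $f'$ (using minimality of $T$ in one direction and maximality of $F$ in the other) is the heart of the matter and is done correctly. Your handling of the MFC side is also right: even though the required output is only the MTC list, an unqueried MFC $F$ becomes a new MTC of $f'$, so the correct outputs for $f$ and $f'$ differ while the transcripts coincide. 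Two small remarks. First, the final count implicitly uses that $MTC(f)$ and $MFC(f)$ are disjoint sets of query points (one consists of true points, the other of false points), which is worth one explicit sentence. Second, your closing paragraph about exhibiting families with many MTC and MFC is unnecessary: the bound in the statement is parameterized by $f$ itself, so your argument --- which in fact shows that \emph{every} run on \emph{every} monotone $f$ needs at least $|MTC(f)|+|MFC(f)|$ queries, a slightly stronger conclusion than the ``can require up to'' phrasing --- already establishes the claim; though in the context of this paper one could note that Property~\ref{prpy:MCS-2} supplies matrices for which both quantities are exponential, making the bound meaningful.
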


This property suggests that, in general, to generate all MTC, it
is necessary to generate all MFC, and vice-versa.  For example, the
algorithm of Stoye and Wittler \cite{stoye-unified} described in
Section~\ref{sec:prelim} is a satisfiability oracle based algorithm --
it uses a polyno\-mial-time oracle to decide if a given submatrix has
the C1P, but it doesn't use this structure any further. Once it has
found the complete list $C$ of MCS, it will proceed to check all MC1PS
sets as candidate conflicting sets before terminating. Since this does
not keep the MC1PS sets explicitly, but instead uses backtracking, it
may generate the same candidates repeatedly resulting in a substantial
duplication of effort. In fact, this algorithm can easily be modified
to produce {\it any} monotone boolean function given by a truth
oracle.

One of the major results on generating MTC for monotone boolean
functions, is due to Fredman and Khachiyan. It states that generating
both sets together can be achieved in time quasi-polynomial in the
number of MTC plus the number of MFC.

\begin{theorem}{\em \cite{fredman-complexity}}
  \label{thm:MTC-MFC}
  Let $f: \{0,1\}^m \rightarrow \{0,1\}$ be a monotone boolean
  function whose value at any point $x \in \{0,1\}^m$ can be
  determined in time $t$, and let $C$ and $D$ be respectively the sets
  of the MTC and MFC of $f$. Given two subsets $C'\subseteq C$ and
  $D'\subseteq D$ of total size $s=|C'|+|D'|$, deciding if $C\cup
  D=C'\cup D'$, and if $C\cup D\neq C'\cup D'$ finding an element in
  $(C\backslash C') \cup (D\backslash D')$ can be done in time
  $O(m(m+t)+s^{o(\log s)})$.
\end{theorem}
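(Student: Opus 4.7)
The plan is to reformulate the question as a hypergraph dualization problem and then to run the recursive splitting algorithm of Fredman and Khachiyan. First I would identify each MTC $X_I$ of $f$ with the set $I \subseteq [m]$ and each MFC $X_J$ with the complement $[m]\setminus J$. Monotonicity of $f$ makes the resulting family $\mathcal{C}$ of subsets from $C$ an antichain, and the family $\mathcal{T}$ of complements obtained from $D$ is exactly the collection of minimal transversals of $\mathcal{C}$. So the problem becomes: given antichains $\mathcal{C}'\subseteq \mathcal{C}$ and $\mathcal{T}'\subseteq \mathcal{T}$, decide whether $(\mathcal{C}',\mathcal{T}')=(\mathcal{C},\mathcal{T})$, and otherwise output a missing set from one side using only the truth-oracle for $f$.

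Second, I would dispose of the easy cases by a short battery of necessary consistency checks performed in $O(m(m+t))$ time: each $c\in\mathcal{C}'$ must meet each $t\in\mathcal{T}'$; the unions $\bigcup\mathcal{C}'$ and $\bigcup\mathcal{T}'$ must each contain every variable that actually influences $f$; and the maximum set size in $\mathcal{C}'\cup\mathcal{T}'$ is bounded by $|\mathcal{C}'|+|\mathcal{T}'|$. A failure of any of these yields an explicit small vector on which evaluating $f$ reveals a new MTC or MFC, terminating the procedure with a witness in $(C\setminus C')\cup(D\setminus D')$.

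Third, the heart of the algorithm is the Fredman--Khachiyan recursion. If $|\mathcal{C}'|\cdot|\mathcal{T}'|\le 1$ the instance is trivial. Otherwise one locates a variable $v\in[m]$ whose frequency in $\mathcal{C}'$ or in $\mathcal{T}'$ exceeds $1/\log s$, and branches on whether a hypothetical counterexample contains $v$. Each branch reduces to a dualization subinstance of exactly the same form on $[m]\setminus\{v\}$, where the oracle is $f$ with the coordinate $v$ fixed; restrictions of monotone functions are monotone, so the recursion is well-posed. A careful accounting shows that both branches shrink the input by a factor of at least $(1-1/\log s)$, bounding the recursion depth by $O(\log^2 s/\log\log s)$ and the total leaf count by $s^{o(\log s)}$. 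Combined with the $O(m(m+t))$ oracle cost at each leaf, this gives the stated running time.

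The main obstacle is the combinatorial lemma guaranteeing such a frequent variable whenever $(\mathcal{C}',\mathcal{T}')$ is not already dual. This is the technical core of Fredman and Khachiyan's argument: when no variable occurs in more than a $1/\log s$ fraction of either family, a greedy construction produces a set $S\subseteq[m]$ that is disjoint from some $c\in\mathcal{C}'$ yet is a transversal of $\mathcal{C}'$ (or, symmetrically, with the roles of $\mathcal{C}'$ and $\mathcal{T}'$ interchanged), certifying non-duality and providing the oracle input that exposes a missing element. Calibrating the threshold to $1/\log s$ so that the depth bound delivers $s^{o(\log s)}$ rather than only $s^{O(\log s)}$ is the subtle part of the analysis.
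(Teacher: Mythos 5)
This theorem is imported by the paper as a black-box citation of Fredman and Khachiyan; the paper contains no proof of it, so there is no ``paper proof'' to compare against --- what you have written is a reconstruction of the cited argument itself. Your architecture is the right one: reduce joint generation to testing duality of the hypergraph pair $(\mathcal{C}',\mathcal{T}')$ (MTCs versus complements of MFCs), do the $O(m(m+t))$ sanity checks, use the key inequality (if the pair is dual then $\sum_F 2^{-|F|}+\sum_G 2^{-|G|}\ge 1$, hence a small set and a frequent variable exist), recurse on a frequent variable, and convert any witness of non-duality into a new MTC or MFC by a monotone bit-flipping descent costing $O(m)$ oracle calls.

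The genuine gap is in the quantitative accounting, and it is fatal to the stated bound. With a frequency threshold of $1/\log s$ and both branches shrinking the instance by a factor $(1-1/\log s)$, the recursion depth needed to exhaust a volume of order $s^2$ is $\Theta(\log^2 s)$, not the $O(\log^2 s/\log\log s)$ you claim, and the leaf count is then $2^{\Theta(\log^2 s)}=s^{\Theta(\log s)}$. That is essentially Fredman--Khachiyan's simpler Algorithm~A, and it does \emph{not} give $s^{o(\log s)}$. The $s^{o(\log s)}$ bound requires their Algorithm~B: the threshold is calibrated to $\epsilon(v)=1/\chi(v)$ where $\chi(v)^{\chi(v)}=v$ (so $\chi(v)\sim\log v/\log\log v$), applied to the volume $v=|\mathcal{C}'|\cdot|\mathcal{T}'|$, and the two branches are treated \emph{asymmetrically}, yielding a recurrence of the shape $B(v)\le B\bigl((1-\epsilon)v\bigr)+B(\epsilon v)$ whose solution is $v^{\chi(v)+O(1)}=s^{o(\log s)}$. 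The existence of a variable of frequency at least $1/\chi(v)$ in the dual case, and the handling of the low-frequency case, are precisely where the extra work beyond your sketch lies; as written, your argument proves only the weaker $s^{O(\log s)}$ statement.
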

  
The key element to achieve this result is an algorithm that tests if
two monotone boolean functions are duals of each other
(see~\cite{eiter-computational} for a recent survey on this topic). As
a consequence, we can then use the algorithm of Fredman and Khachiyan
to generate all MCS and MC1PS in quasipolynomial time.

\section{Experimental results}
\label{sec:results}

We present here results obtained on simulated and real datasets 
using
the \texttt{cl-jointgen} implementation (release 2008-12-01)
of the joint generation method
which is publicly available~\cite{cl-jointgen} with an oracle to test
the C1P property based on the algorithm described
in~\cite{mcconnell-certifying}.

\subsection{Simulated data}
\label{sec:simulations}

We generated several simulated datasets of ancestral syntenies as follows:
\begin{itemize}
\item We started from an ancestral unichromosomal genome $G$ composed
  of $40$ genomic markers, labeled from $1$ to $40$ and ordered
  increasingly along this chromosome (i.e. it is represented by the
  identity permutation on $\{1, \ldots, 40\}$.
\item From this ancestor, we extracted $39$ true positive ancestral
  syntenies, labeled $a_1,\ldots,a_{39}$, in such a way that $a_i$ is
  an interval of $G$ starting at marker $i$ and of length chosen
  uniformly in the set $\{2,\ldots,d\}$, where $d$ is a parameter of
  the dataset corresponding to the maximum degree of the generated
  matrix. We considered the value $d=2,3,4,5$. 
\item Finally, we added $6$ false positive ancestral syntenies,
  defined as sets of markers containing between $2 \leq c \leq d$
  markers and spanning an interval of $G$ of at most $g$ markers
  (hence the gaps in this false positive contain $g-c$ markers), where
  $g$ is a parameter of the dataset. We considered the values $g=2,5,10$.
\item To simulate the fact that, in general, if ancestral syntenies
  are weighted according to their conservation in extant species,
  false positives have a lesser weight than true positives, we
  weighted false positives by $0.5$ and true positives by $1.0$.
\item For each pair of parameters $(d,g)$, we generated $10$ datasets.
\end{itemize}

This generation method was designed to simulate moderately large
datasets that resemble real datasets. For most of the $120$ datasets,
the generation of all MCS and MC1PS could be completed within three
hours of computation, but for $13$ of them (one for $(d,g)=(3,5)$,
three for $(4,5)$, one for $(5,5)$, four for $(3,10)$, three for
$(4,10)$ and two for $(5,10)$) that were stopped if the computations
were not completed after three hours.  These unfinished datasets were
discarded when computing the statistics described below.

Table~\ref{tab:simul1} presents a summary of the number of MCS and
MC1PS observed in all the completed datasets; for computations that
had to be interrupted, similar results are observed from the partial
information contained in the log files. We can first observe the
number of MC1PS is much larger than the number of MCS. The second
important observation is a general trend towards increasing the number
of MCS when either $d$ or $g$ increases, which can be explained by the
more intricate combinatorial structure of sets of ancestral
syntenies. Indeed, for example, with $d=2$ and $g=2$, MCS are easy to
find and count, as cycles in the corresponding bipartite graph are
short and are relatively easy to find, even using brute-force
approach. With larger values of $g$, cycles length increase and
overlapping cycles appear more frequently, which results in more
MCS. Although it seems natural that increasing the value of $d$
results in more MCS understanding more precisely the impact of
increasing $d$ requires a better understanding of the combinatorial
structure of MCS with matrices of degree larger than $2$
(see~\cite{you-on} for the case $d=3$). Regarding MC1PS, we notice
that increasing $g$ results also in an increase of the number of
MC1PS, but we also notice that when $d$ attains the value $5$, there
seems to be a decrease of the number of MC1PS. A possible explanation
is that, with such degree, constraints on sets of rows that have the
C1P increase as a given row can now overlaps a large number of other
rows, which reduces the number of MC1PS containing such
rows. Generally, the impact of the degree on MC1PS deserves further
theoretical or experimental investigations.

\begin{table}\small
  \begin{center}
    \begin{tabular}{|c|c|c|c|c|c|c|}
      \hline
      $(d,g)$ & Min. Number & Max. Number  & Average Number & Min. Number &
      Max. Number & Average Number
      \\ 
      & of MCS & of MCS & of MCS  &  of MC1PS &  of MC1PS & of MC1PS 
      \\ \hline \hline
      (2,2)       &  17    & 25  & 21.4  & 464 & 3120 & 1393.2
      \\ \hline
      (3,2)       &  27    & 51  & 38.8  & 492 & 4800 & 2701.4
      \\ \hline
      (4,2)       &  42    & 84  & 59.1  & 756 & 11286& 2861.8
      \\ \hline
      (5,2)       &  68    & 137 & 93.3  & 160 & 10320& 2757.2 
      \\ \hline \hline
      (2,5)       &  16    & 29  & 22.4  & 1360& 10800& 4927.9
      \\ \hline
      (3,5)       &  31    & 106 & 58.1  & 624 & 10395& 5431.9
      \\ \hline
      (4,5)       &  65    & 176 & 107.7 & 1612& 11934& 6366.9
      \\ \hline 
      (5,5)       &  75    & 356 & 154.8 & 1785& 9420 & 3898.3
      \\ \hline \hline
      (2,10)      &  22    & 46  & 32.4  & 601 & 16954& 5796.8
      \\ \hline
      (3,10)      &  72    & 133 & 94.3  & 3876& 16030& 10995.8
      \\ \hline
      (4,10)      &  101   & 583 & 265.3 & 1434& 23474& 13278.3
      \\ \hline
      (5,10)      &  263   & 487 & 310.9 & 5432& 12362& 8909.5
      \\ \hline \hline
      all values  &  16    & 583 & 96.7  & 160 & 23474& 5312.5
      \\ \hline
    \end{tabular}
    \caption{Statistics on the number of MCS and MC1PS in all completed datasets.}
    \label{tab:simul1}
  \end{center}
\end{table}

For each dataset, and each ancestral synteny, we also computed two
statistics, the Conflicting Ratio (CR) and the MC1PS Ratio (C1PR). For
every row, we also computed its {\em MCS rank} and {\em MC1PS rank},
defined as follows: the MCS (resp. MC1PS) rank of a row of $M$ is its
rank when rows are ordered by increasing CR (resp. increasing C1PR).
Table~\ref{tab:simul2} presents a summary of these statistics. We can
notice that, in general, MCS seem to discriminate slightly better
between false positives and true positives in terms of ratio: the
average difference between the CR of a false positive and of a true
positive is slightly larger than the average difference of the
C1PR. The difference between the rankings is more strongly in favor of
MCS: on average, a false positive is more likely to have a MCS rank
close to the maximum rank than to have a low MC1PS rank. The
conclusion we can draw from these average results is that the CR and
MCS rank seem to better discriminate false positive from true
positives\footnote{The opposite conclusion was stated in the
  preliminary version of this paper~\cite{chauve-v0}, due to an
  experimental error.}. Considering the weights of the rows to weight
MCS and C1PS does not change significantly this conclusion (results
not shown).

\begin{table}\small
  \begin{center}
    \begin{tabular}{|c|c|c|c|c|c|c|}
      \hline
      Dataset & Average & Average & Average & Average & Average FP & Average FP
      \\ 
              & FP\_CR  &  TP\_CR &  FP\_C1PR & TP\_C1PR  & MCS rank   & MC1PS rank
      \\ \hline \hline
      (2,2)  & 0.20 & 0.05 & 0.66 & 0.82 & 39.65 & 15.47
      \\ \hline
      (3,2)  & 0.20 & 0.06 & 0.67 & 0.76 & 38.48 & 19.27
      \\ \hline
      (4,2)  & 0.21 & 0.06 & 0.61 & 0.73 & 39.35 & 18.45
      \\ \hline
      (5,2)  & 0.21 & 0.05 & 0.59 & 0.69 & 38.37 & 19.57
      \\ \hline \hline
      (2,5)  & 0.21 & 0.07 & 0.69 & 0.79 & 39.53 & 17.72
      \\ \hline
      (3,5)  & 0.21 & 0.07 & 0.63 & 0.73 & 38.33 & 18.72
      \\ \hline
      (4,5)  & 0.21 & 0.06 & 0.60 & 0.65 & 38.16 & 21.19
      \\ \hline 
      (5,5)  & 0.21 & 0.06 & 0.59 & 0.65 & 37.91 & 21.33
      \\ \hline \hline
      (2,10) & 0.27 & 0.11 & 0.62 & 0.81 & 38.20 & 14.02
      \\ \hline
      (3,10) & 0.23 & 0.08 & 0.62 & 0.69 & 36.41 & 19.92
      \\ \hline
      (4,10) & 0.26 & 0.07 & 0.55 & 0.62 & 39.52 & 20.19
      \\ \hline
      (5,10) & 0.23 & 0.07 & 0.55 & 0.58 & 37.15 & 22.38
      \\ \hline
    \end{tabular}
    \caption{Statistics on MCS and MC1PS on simulated datasets. FP\_CR
      is the Conflicting Ratio for False Positives, TP\_CR is for CR
      the True Positives, FP\_MR is the MC1PS ratio for False
      Positives and TP\_MR is the MR for True Positives. }
    \label{tab:simul2}
  \end{center}
\end{table}

Confirming the conclusions from Table~\ref{tab:simul2}, we  can
observe in Figures~\ref{fig:C1P} and~\ref{fig:MCS}, the following
facts.
\begin{itemize}
\item The conflicting ratio (CR) discriminates effectively between
  false positives and true positives.  For example conserving all
  syntenies that have a CR at least $0.14$ results in discarding
  $80\%$ of FP while still keeping $83\%$ of TP.
\item The MC1PS ratio (C1PR) does not discriminate as effectively
  between false positives and true positives.
\end{itemize}
It is also interesting to note that only $31\%$ of true positives
ancestral syntenies do not belong to any MCS. These suggests that, at
least with these simulated data, a significant number of ancestral
syntenies do not need to be considered when trying to detect false
positives (as we expect rows with CI equal to $0$ to be true positives
in general), but also that a very large part of true positive show
some conflicting signal despite the low ratio of false positives. This
shows that the extreme approach of discarding all rows belonging to at
least one MCS, suggested in~\cite{bergeron-reconstructing}, can result
in discarding a very large number of true positives.

\begin{figure}
  \begin{center}\scalebox{0.45}{\includegraphics{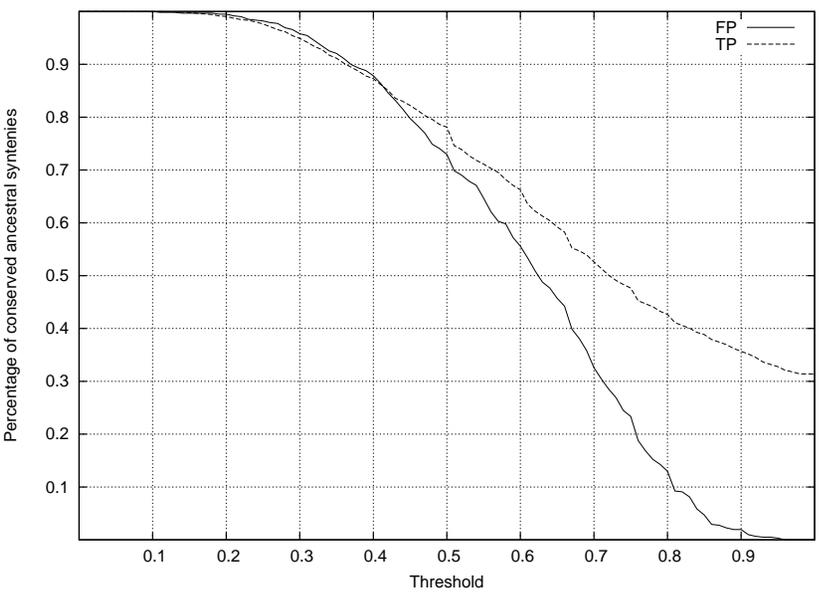}}\end{center}
  \caption{Percentage of conserved ancestral syntenies (y-axis) given
    a minimal C1PR (Threshold, x-axis), for both false positives (FP)
    and true positives (TP) for simulated datasets.}
  \label{fig:C1P}
\end{figure}

\begin{figure}
  \begin{center}\scalebox{0.45}{\includegraphics{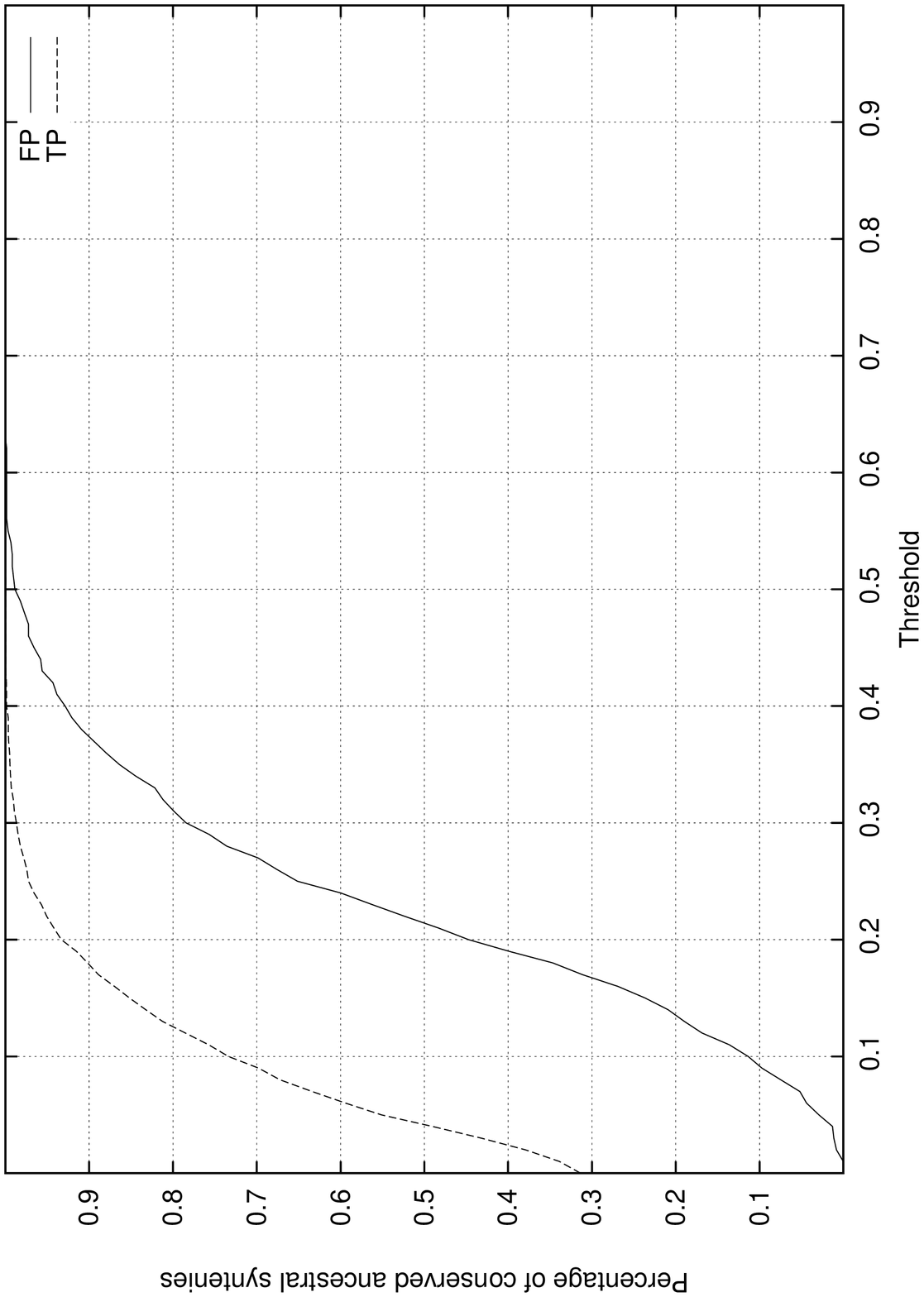}}\end{center}
  \caption{Percentage of conserved ancestral syntenies (y-axis) given
    a maximal C1PR (Threshold, x-axis), for both false positives (FP)
    and true positives (TP) for simulated datasets.}
  \label{fig:MCS}
\end{figure}

\subsection{Application on yeasts real data}
\label{sec:yeasts}

Next, we considered a real dataset, used to reconstruct an ancestral
{\em Saccahromycetaceae} genome from unduplicated yeasts genomes ({\em
  S. kluyveri}, {\em K. thermotolerans}, {\em K. lactis}, {\em
  A. gossypii} and {\em Z. rouxii}), described
in~\cite{chauve-yeasts}. These genomes are represented with $1420$
markers\footnote{The $1420$ markers represent $710$ synteny blocks,
  each block being split into two markers corresponding to its two
  extremities and and represented by an ancestral synteny of size $2$
  containing these two markers.}, and a total of $3106$ ancestral
syntenies were computed, giving a binary matrix $M$ with $3106$ rows
and $1420$ columns. From this large matrix, five submatrices
$\{M_1,\ldots,M_5\}$ were extracted that contained all MCS (each
matrix corresponds to an R-node of the PQR-tree associated to $M$,
see~\cite{chauve-methodological,mcconnell-certifying} for
details). For these five matrices, the number $m_i$ or rows and $n_i$
of columns are the following: $m_1=12$, $n_1=8$, $m_2=200$, $n_2=104$,
$m_3=262$, $n_3=126$, $m_4=801$, $n_4=348$, $m_5=1393$ and $n_5=652$.
For matrices $M_1$, $M_2$ and $M_3$, the joint generation computation
was completed within a few hours. For the larger matrices $M_4$ and
$M_5$, the computations were each interrupted after three days.  We
first report in Table~\ref{tab:yeasts1} the number of MCS and MC1PS 
detected\footnote{For each matrix, we filtered the set of MC1PS
  to discard any set of rows that does not contain all ancestral
  syntenies corresponding to the synteny blocks represented by the
  columns of this matrix.}. We also show the number of ancestral
syntenies that belong to all MC1PS, that we call {\em reliable
  ancestral syntenies}.

\begin{table}\small
  \begin{center}
    \begin{tabular}{|c|c|c|c|}
      \hline
      Matrix & Number of MCS & Number of MC1PS & Number of reliable ancestral syntenies
      \\ \hline 
      $M_1$  & $6$      & $4$      & $6$
      \\ \hline
      $M_2$  & $402$    & $13$     & $166$
      \\ \hline
      $M_3$  & $2214$   & $90$     & $199$
      \\ \hline
      $M_4$  & $1976^*$ & $483^*$  & $617^*$
      \\ \hline
      $M_5$  & $1277^*$ & $883^*$  & $1291^*$
      \\ \hline
    \end{tabular}
    \caption{Statistics on MCS and MC1PS on the yeasts
      dataset. Numbers marked by the symbol $^*$ correspond to
      partial results for interrupted computations.}
    \label{tab:yeasts1}
  \end{center}
\end{table}

We can observe that, unlike with simulated data, the number of MCS is
larger than the number of MC1PS, a fact that is not related to the
filtering of MC1PS. This could be explained by the fact that ancestral
syntenies can be much larger here than in the simulated data (where
they were of size at most $5$) which can imply that a single false
positive can create MCS with many different sets of true positive
rows. The second important observation is that approximately $85\%$
of all ancestral syntenies belong to all MC1PS (or at least, for $M_4$
and $M_5$, all generated MC1PS) and can then be considered as
reliable. This fact addresses a question that is raised
in~\cite{chauve-yeasts} about the reliability of ancestors computed
from ancestral syntenies, as it shows that most ancestral syntenies
are reliable, at least from a combinatorial optimization point of
view. To refine this observation, we show in
Figure~\ref{fig:yeasts} that most ancestral syntenies have a
high C1PR and a low CR.

\begin{figure}
  \begin{center}\scalebox{0.45}{\includegraphics{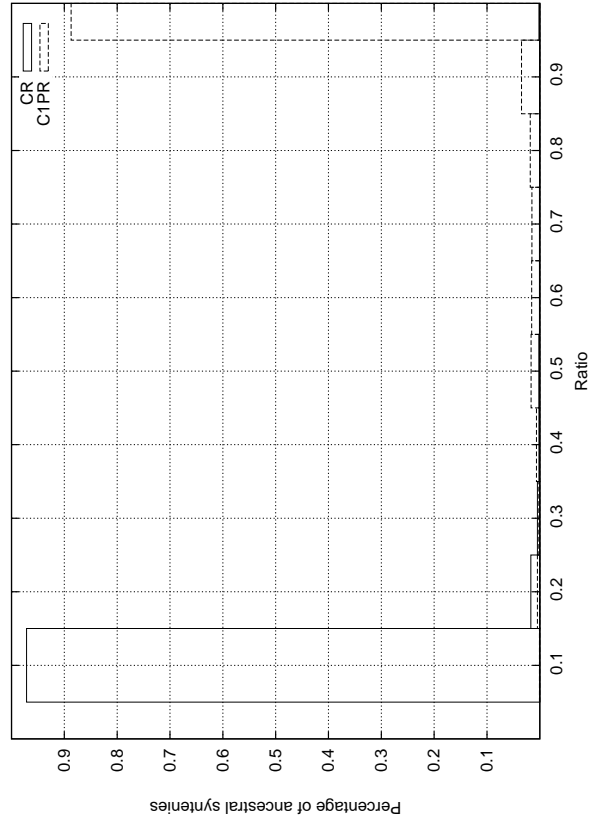}}\end{center}
  \caption{Percentage of conserved ancestral syntenies (y-axis) with a
    given CR (resp. C1PR). Each bar represents the percentage of
    ancestral syntenies whose CR (resp. C1PR) is in an interval of
    length $0.1$.}
  \label{fig:yeasts}
\end{figure}

Finally, we selected the subset of ancestral syntenies with a CR at
most $0.5$ and a C1PR at least $0.5$. This subset of ancestral
syntenies does not have the C1P, but only three ancestral syntenies
need to be discarded to obtain a C1P matrix, that defines a set of
$13$ CARs. In comparison, the set of all ancestral syntenies from
$M_1,\ldots, M_5$ required $12$ ancestral syntenies to be discarded in
order to have the C1P, leading to $9$ CARs. This shows that most CARs
obtained in~\cite{chauve-yeasts} are supported if only ancestral
syntenies with low CR and high C1PR are conserved.

\section{Conclusion and perspectives}
\label{sec:conc}

This paper describes preliminary theoretical and experimental results
on Minimal Conflicting Sets and Maximal C1P Sets. In particular, we
suggested that Tucker patterns are fundamental to understanding the
combinatorics of MCS, and that the generation of all MCS is a hard
problem, related to monotone boolean functions. From an experimental
point of view it appears, at least on datasets of adjacencies, that
MCS offer a better way to detect false positive ancestral syntenies
than MC1PS. From a methodological point of view, it suggests that the
joint generation framework provides a very general and flexible tool
for handling the notion of minimal conflicting data in computational
biology, as shown for example in~\cite{haus-knock}. This leaves
several open problems to attack.

\paragraph{Detecting non-conflicting rows.}
The complexity of detecting rows of a matrix that do not belong to any
MCS when rows can have an arbitrary number of entries $1$ is still
open. Solving this problem probably requires a better understanding of
the combinatorial structure of MCS and Tucker patterns.  Tucker
patterns have also be considered in~\cite[Chapter 3]{dom-recognition},
where polynomial time algorithms are given to compute a Tucker pattern
of a given type for a matrix that does not have the C1P. Even if these
algorithms can not obviously be modified to decide if a given row
belongs to a given Tucker pattern, they provide useful insight on
Tucker patterns.

It follows from the dual structure of monotone boolean functions that
the question of whether a row belongs to any MCS is equivalent to the
question of whether it belongs to any MC1PS. Indeed, for an arbitrary
oracle-given function, testing if a variable appears in any MTC is as
difficult as deciding if a list of MTC is complete.  Consider an
oracle-given $f$ and a list of its MTC which define a (possibly
different) function $f'$.  We can build a new oracle function $g$ with
an additional variable $x_0$, such that $g(x_0,x)=1$ if and only if
$x_0=0$ and $f'(x)=1$ or $x_0=1$ and $f(x)=1$.

\paragraph{Generating all MCS and MC1PS.} Right now, this can be
approached using the joint generation method, but the number of MCS
and MC1PS makes this approach time consuming for large matrices. A
natural way to deal with such problem would be to generate at random
and uniformly MCS and MC1PS. For MCS, this problem is at least as hard
as generating random cycles of a graph, which is known to be a hard
problem~\cite{jerrum-random}. We are not aware of any work on the
random generation of MC1PS.

An alternative to random generation would be to abort the joint
generation after it generates a large number of MCS and MC1PS, but the
quality of the approximation of the MCS ratio and MC1PS ratio so
obtained would not be guaranteed.  Another approach for the generation
of all MCS is based on the remark that, for adjacencies, it can be
reduced to generating all claws and cycles of the graph
$G_M$. Generating all cycles of a graph can be done in time that is
polynomial in the number of cycles, using
backtracking~\cite{read-bounds}. It is then tempting to use this
approach in conjunction with dynamic partition
refinement~\cite{habib-lex} for example or the graph-theoretical
properties of Tucker patterns described in~\cite{dom-recognition}.

\paragraph{Combinatorial characterization of false positive ancestral
  syntenies.}  It is interesting to remark that, with matrices of
degree $2$, most false positives can be identified in a simple
way. True positive rows define a set of paths in the graph $G_M$,
representing ancestral genome segments, while false positive rows
$\{i,j\}$, unless $i$ or $j$ is an extremity of such a path (in which
case it does not exhibit any combinatorial sign of being a false
positive), both the vertices $i$ and $j$ belong to a claw in the graph
$G_M$. And it is easy to detect all edges in this graph with both ends
belonging to a claw. In order to extend this approach to more general
datasets, where $\Delta(M) > 2$, it would be helpful to understand
better the impact of adding a false positive row in $M$. The most
promising approach would be to start from the {\em partition
  refinement}~\cite{habib-lex} obtained from all true positive rows
and form a better understanding of the combinatorial structure of
connected components of the overlap graph that do not have the C1P.

\paragraph{Computation speed.}  
On large datasets, especially with matrices with an arbitrary number
of entries $1$ per row, some connected components of the overlap graph
can be very large (see the data in~\cite{chauve-yeasts} for
example). In order to speed up the computations, algorithmic design
and engineering developments are required, both in the joint
generation algorithm and in the problem of testing the C1P for
matrices after rows are added or removed.


\section{Acknowledgments}
Cedric Chauve and Tamon Stephen were partially supported by NSERC
Discovery Grants.  Utz-Uwe Haus was supported by the Magdeburg Center
for Systems Biology, funded by a FORSYS grant of the German Ministry
of Education and Research.  Vivija P. You was partially supported by
The Department of Mathematics of Simon Fraser University. The authors
would also like to acknowledge the support of the IRMACS Centre at
SFU.  A preliminary version of this paper appeared as
\cite{chauve-v0}.


\bibliographystyle{jcb}

\begin{thebibliography}{10}\label{bibliography}

\bibitem{adam-modelfree}
  Z.~Adam, M.~Turmel, C.~Lemieux, and D.~Sankoff.
  \newblock Common intervals and symmetric difference in a model-free
  phylogenomics, with an application to streptophyte evolution.
  \newblock {\em J. Comput. Biol.} 14, pp. 436--445.
  \newblock  2007.
  
\bibitem{alizadeh-physical}
  F.~Alizadeh, R.~Karp, D.~Weisser and G.~Zweig.
  \newblock Physical mapping of chromosomes using unique probes. 
  \newblock {\em J. Comput. Biol.} 2, pp. 159--184.
  \newblock 1995.

\bibitem{bergeron-reconstructing}
  A.~Bergeron, M.~Blanchette, A.~Chateau and C.~Chauve.
  \newblock Reconstructing ancestral gene orders using conserved intervals. 
  \newblock In {\em WABI 2004}, vol. 3240  of LNCS/LNBI, pp. 14--25.
  \newblock Springer.
  \newblock 2004.

\bibitem{bergeron-formal}
A.~Bergeron, C.~Chauve, Y.~Gingras
\newblock Formal models of gene clusters.
\newblock Bioinformatics Algorithms: Techniques and Applications (A.~Zelikovsky and I.~Mandoiu, editors), Wiley Series on Bioinformatics: Computational Techniques and Engineering. pp. 177--202.
\newblock Wiley Interscience, 2008.

\bibitem{booth-testing}
  K.S.~Booth and G.S.~Lueker.
  \newblock Testing for the consecutive ones property, interval graphs, 
    and graph planarity.
  \newblock {\em J. Comput. Syst. Sci.} 13, pp. 335--379.
  \newblock 1976. 

\bibitem{chauve-v0}
  C.~Chauve, U.-U.~Haus, T.~Stephen and V.~You.
  \newblock Minimal Conflicting Sets for the Consecutive Ones Property in
       Ancestral Genome Reconstruction.
  \newblock In {\em RECOMB-CG 2009}, vol. 5817 of LNCS/LNBI, pp. 48--58.
  \newblock Springer.
  \newblock 2009.


\bibitem{chauve-methodological}
C.~Chauve and E.~Tannier.
\newblock A methodological framework for the reconstruction of contiguous
  regions of ancestral genomes and its application to mammalian genome.
\newblock {\em PLoS Comput. Biol.} 4, paper e1000234.
\newblock 2008.

\bibitem{chauve-yeasts}
C.~Chauve, H.~Gavranovic, A.~Ouangraoua and E.~Tannier.
\newblock Yeasts ancestral genome reconstructions: the possibilities of computational methods.
\newblock Submitted to the {\em RECOMB-CG 2009} special issue of {\em J. Comput. Biol.}.
\newblock 2009.

\bibitem{christof-branch}
  T.~Christof, M.~J\"unger, J.~Kececioglu, P.~Mutzel and G.~Reinelt.
  \newblock A branch-and-cut approach to physical mapping of
  chromosome by unique end-probes.  
  \newblock {\em J. Comput. Biol.}  4, pp. 433--447.
  \newblock 1997.

\bibitem{dom-recognition}
  M.~Dom.
  \newblock Recognition, Generation, and Application of Binary Matrices with the Consecutive-Ones Property.
  \newblock Dissertation, Institut f\"ur Informatik, Friedrich-Schiller-Universit\"at, Jena.
  \newblock 2008.

\bibitem{dom-algorithmic}
  M.~Dom.
  \newblock Algorithmic Aspects of the Consecutive-Ones Property.
  \newblock {\em Bull. Eur. Assoc. Theor. Comput. Sci. EATCS} 98, pp. 27-–59. 
  \newblock EATCS.
  \newblock 2009.

\bibitem{eiter-computational}
  T.~Eiter, K.~Makino and G.~Gottlob.
  \newblock Computational aspects of monotone dualization: A brief
  survey.
  \newblock {\em Disc. Appl. Math.} 156, pp. 2035--2049.
  \newblock 2008.

\bibitem{fredman-complexity}
  M.L.~Fredman and L.~Khachiyan.
  \newblock On the complexity of dualization of monotone disjunctive
  normal forms.
  \newblock {\em J. Algorithms} 21, pp. 618--628.
  \newblock 1996.


\bibitem{goldberg-four}
  P.W.~Goldberg, M.C.~Golumbic, H.~Kaplan and R.~Shamir
  \newblock Four strikes againts physical mapping of DNA
  \newblock {\em J. Comput. Biol.} 2, pp. 139--152.
  \newblock 1995. 

\bibitem{gunopulos-data}
  D.~Gunopulos, R.~Khardon, H.~Mannila and H.~Toivonen.
  \newblock Data mining, hypergraph transversals and machine learning.
  \newblock In {\em PODS 1997}, pp. 209--216.
  \newblock ACM. 
  \newblock 1997. 

\bibitem{gurvich-generating}
  V.~Gurvich and L.~Khachiyan.
  \newblock On generating the ireedundant conjunctive and disjunctive
  normal forms of monotone Boolean functions.
  \newblock {\em Disc. Appl. Math.} 96--97, pp. 363--373.
  \newblock 1999.

\bibitem{habib-lex}
  M.~Habib, R.M.~McConnell, C.~Paul and L.~Viennot. 
  \newblock Lex-BFS and partition refinement, with applications to 
  transitive orientation, interval graph recognition and  consecutive ones testi
ng. 
  \newblock {\em Theoret. Comput. Sci.} 234, pp. 59--84.
  \newblock 2000.

\bibitem{cl-jointgen}
  U.-U.~Haus and T.~Stephen.
  \newblock \texttt{CL-JOINTGEN}: A Common Lisp Implementation of the
  Joint Generation Method.
  \newblock available at \url{http://primaldual.de/cl-jointgen/}.
  \newblock 2008.

\bibitem{haus-knock}
  U.-U.~Haus, S.~Klamt and T.~Stephen.
  \newblock Computing Knock out Strategies in Metabolic Networks 
  \newblock {\em J. Comput. Biol.} 15, pp. 259--268.
  \newblock 2008, arXiv:0801.0082. 

\bibitem{hsu-simple}
  W.-L.~Hsu.
  \newblock A simple test for the Consecutive Ones Porperty.
  \newblock {\em J. Algorithms} 43, pp. 1--16.
  \newblock 2002. 

\bibitem{jerrum-random}
  M.R.~Jerrum, L.G.~Valiant and V.Y.~Vazirani.
  \newblock Random generation of combinatorial structures from a uniform distribution.
  \newblock {\em Theoret. Comput. Sci.} 43, pp. 169--188.
  \newblock 1986. 

\bibitem{lu-test}
  W.-F.~Lu and W.-L.~Hsu.
  \newblock A test for the Consecutive Ones Property on noisy data --
  application to physical mapping and sequence assembly.  
  \newblock {\em J. Comp. Biol.} 10, pp. 709--735.
  \newblock 2003.

\bibitem{ma-reconstructing}
J.~Ma  {\em et al.}.
\newblock Reconstructing contiguous regions of an ancestral genome.
\newblock {\em Genome Res.}, 16, pp.1557--1565.
\newblock 2006.

\bibitem{mcconnell-certifying}
  R.M.~McConnell.
  \newblock A certifying algorithm for the consecutive-ones property.
  \newblock In {\em SODA 2004}, pp. 761-770.
  \newblock ACM. 
  \newblock 2004.

\bibitem{meidanis-on}
  J.~Meidanis, O.~Porto and G.P.~Telle.
  \newblock On the consecutive ones property.
  \newblock{\em Discrete Appl. Math.} 88, pp. 325--354.
  \newblock 1998.

\bibitem{ouangraoua-prediction}
  A.~Ouangraoua, F.~Boyer, A.~McPherson, E.~Tannier and C.~Chauve.
  \newblock Prediction of contiguous ancestral regions in the amniote ancestral genome
  \newblock In {\em ISBRA 2009}, vol. 5542  of LNCS/LNBI, pp. 173--185.
  \newblock Springer.
  \newblock 2009.

\bibitem{read-bounds}
  R.C.~Read and R.E.~Tarjan.
  \newblock Bounds on backtrack algorithms for listing cycles, paths,
  and spanning trees. 
\newblock {\em Networks} 5, pp. 237--252.
  \newblock 1975.

\bibitem{stoye-unified}
  J.~Stoye and R.~Wittler  
  \newblock A unified approach for reconstructing ancient gene clusters.
  \newblock {\em IEEE/ACM Trans. Comput. Biol. Bioinfo.} 6, pp 387--400.
  \newblock 2009.
  
\bibitem{tucker-structure}
  A.C.~Tucker. 
  \newblock A structure theorem for the consecutive 1's property.
  \newblock {\em J. Combinat. Theory (B)} 12, pp. 153--162.
  \newblock 1972.

\bibitem{valiant-complexity}
  L.G.~Valiant.
  \newblock The complexity of enumeration and reliability problems.
  \newblock {\em  SIAM J. Comput.} 8, pp. 410--421.
  \newblock 1979.

\bibitem{you-on}
  V.P.~You.
  \newblock On matrices that do not have the Consecutive Ones Property.
  \newblock MSc. Thesis, Department of Mathematics, Simon Fraser University.
  \newblock 2009.

\end{thebibliography}

~\vfill\pagebreak
\section*{Appendix: the five Tucker patterns}

$$
M_{I} :
\begin{array}{|r|cccccccc|}
 \hline
 & c_1 & c_2 & c_3 & c_4 & \dots & c_{q} & c_{q+1} & c_{q+2}\\
 \hline
 r_1 & \mathbf{1} & \mathbf{1} & {0} & {0} & \dots & {0} & {0} & {0} \\
 r_2 & {0} & \mathbf{1} & \mathbf{1} & {0} & \dots & {0} & {0} & {0} \\  
 r_3 & {0} & {0} & \mathbf{1} & \mathbf{1} & \dots & {0} & {0} & {0} \\
\vdots& \vdots     &  \vdots       &     \vdots    &      \vdots   &\ddots&     \vdots    &    \vdots    &  \vdots\\
 r_{q} & {0} & {0} & {0} & {0} & \dots & \mathbf{1} & \mathbf{1} & {0} \\  
r_{q+1}& {0} & {0} & {0} & {0} & \dots & {0} & \mathbf{1} & \mathbf{1} \\
r_{q+2}& \mathbf{1} & {0} & {0} & {0} & \dots & {0} & {0} & \mathbf{1} \\ 
 \hline
\end{array}
$$

$$
M_{II} :
\begin{array}{|r|ccccccccc|}
 \hline
 & c_1 & c_2 & c_3 & c_4 & \dots & c_{q} & c_{q+1} & c_{q+2} & c_{q+3}\\
 \hline
 r_1 & \mathbf{1} & \mathbf{1} & {0} & {0} & \dots & {0} & {0} & {0} & {0}\\
 r_2 & {0} & \mathbf{1} & \mathbf{1} & {0} & \dots & {0} & {0} & {0} & {0}\\  
 r_3 & {0} & {0} & \mathbf{1} & \mathbf{1} & \dots & {0} & {0} & {0} & {0}\\
\vdots& \vdots     &  \vdots       &     \vdots    &      \vdots   &\ddots&     \vdots    &    \vdots    &  \vdots        &  \vdots \\
 r_{q} & {0} & {0} & {0} & {0} & \dots & \mathbf{1} & \mathbf{1} & {0} & {0}\\  
r_{q+1}& {0} & {0} & {0} & {0} & \dots & {0} & \mathbf{1} & \mathbf{1}& {0} \\
r_{q+2}& \mathbf{1} & \mathbf{1} & \mathbf{1} & \mathbf{1} & \dots & \mathbf{1} & \mathbf{1} & {0}& \mathbf{1} \\ 
r_{q+3}& {0} & \mathbf{1} & \mathbf{1} & \mathbf{1} & \dots & \mathbf{1} & \mathbf{1} & \mathbf{1}& \mathbf{1} \\ \hline
\end{array}
$$

$$
M_{III} :
\begin{array}{|r|ccccccccc|}
 \hline
 & c_1 & c_2 & c_3 & c_4 & \dots & c_{q} & c_{q+1} & c_{q+2} & c_{q+3}\\
 \hline
 r_1 & \mathbf{1} & \mathbf{1} & {0} & {0} & \dots & {0} & {0} & {0} & {0}\\
 r_2 & {0} & \mathbf{1} & \mathbf{1} & {0} & \dots & {0} & {0} & {0} & {0}\\  
 r_3 & {0} & {0} & \mathbf{1} & \mathbf{1} & \dots & {0} & {0} & {0} & {0}\\
\vdots& \vdots     &  \vdots       &     \vdots    &      \vdots   &\ddots&     \vdots    &    \vdots    &  \vdots        &  \vdots \\
 r_{q} & {0} & {0} & {0} & {0} & \dots & \mathbf{1} & \mathbf{1} & {0} & {0}\\  
r_{q+1}& {0} & {0} & {0} & {0} & \dots & {0} & \mathbf{1} & \mathbf{1}& {0} \\
r_{q+2}& {0} & \mathbf{1} & \mathbf{1} & \mathbf{1} & \dots & \mathbf{1} & \mathbf{1} & {0}& \mathbf{1} \\ 
\hline
\end{array}
$$

$$
M_{IV} :
\begin{array}{|r|cccccc|}
 \hline
 & c_1 & c_2 & c_3 & c_4 & c_5 & c_6\\
 \hline
 r_1& \mathbf{1} & \mathbf{1} & {0} &{0} &{0} &  {0} \\
 r_2 & {0} & {0} & \mathbf{1}& \mathbf{1} &{0} &{0} \\
 r_3 & {0} & {0} &{0} & {0} &\mathbf{1} & \mathbf{1} \\
 r_4 & {0} & \mathbf{1} &{0} & \mathbf{1} &{0} & \mathbf{1} \\ 
\hline
\end{array}
$$

$$
M_{V} :
\begin{array}{|r|ccccc|}
 \hline
 & c_1 & c_2 & c_3 & c_4 & c_5 \\
 \hline
 r_1& \mathbf{1} & \mathbf{1} & {0} &{0} &  {0} \\
 r_2 & \mathbf{1} & \mathbf{1} & \mathbf{1}& \mathbf{1} &{0}  \\
 r_3 & {0} &{0} & \mathbf{1} &\mathbf{1} & {0} \\
 r_4 & \mathbf{1} & {0} &{0} & \mathbf{1} &\mathbf{1}  \\ 
\hline
\end{array}
$$

\end{document}